\renewcommand{\ALG@name}{Alg}
\newcounter{parentnumber}
\title{\LARGE \bf
Guaranteed Phase \& Topology Identification in Three Phase Distribution Grids
}
\author{Mohini Bariya$^*$, Deepjyoti Deka$^{\dagger}$, and Alexandra von Meier$^*$
\thanks{Corresponding authors:}
\thanks{\tt\small mohini@berkeley.edu}
\thanks{$^{*}$Dept of Electrical Engineering, University of California at Berkeley}
\thanks{$^{\dagger}$Theoretical Division, Los Alamos National Laboratory}
\thanks {This work was supported in part by NSF Award 1840083.}
}
\newtheorem{theorem}{\bf{Theorem}}
\newtheorem{lemma}{Lemma}
\newtheorem{remark}{Remark}
\newcommand\copyrighttext{%
  \footnotesize \textcopyright 2021 IEEE. Personal use is permitted, but republication/redistribution requires IEEE permission.
See \url{www.ieee.org//publications_standards//publications//rights//index.html} for more information.}
\newcommand\copyrightnotice{%
\begin{tikzpicture}[remember picture,overlay]
\node[anchor=south,yshift=10pt] at (current page.south) {\fbox{\parbox{\dimexpr\textwidth-\fboxsep-\fboxrule\relax}{\copyrighttext}}};
\end{tikzpicture}%
}
\begin{document}

\maketitle
\copyrightnotice
\thispagestyle{empty}
\pagestyle{empty}

\begin{abstract}
We present a method for joint phase identification and topology recovery in unbalanced three phase radial networks using only voltage measurements. By recovering phases and topology jointly, we utilize all three phase voltage measurements and can handle networks where some buses have a subset of three phases. Our method is theoretically justified by a novel linearized formulation of unbalanced three phase power flow and makes precisely defined and reasonable assumptions on line impedances and load statistics. We validate our method on three IEEE test networks simulated under realistic conditions in OpenDSS, comparing our performance to the state of the art. In addition to providing a new method for phase and topology recovery, our intuitively structured linearized model will provide a foundation for future work in this and other applications. 
\end{abstract}
\begin{IEEEkeywords}
Topology, Phase, PMU, Synchrophasor, Distribution networks, Radial, Tree, Noise
\end{IEEEkeywords}


\section{Introduction}
\IEEEPARstart{T}{opology} identification---which determines the connectivity of network nodes from an available measurement set---is vital in the electric grid, especially in distribution networks \cite{deka2017structure}, which may be switched between multiple operating configurations. Real-time topology awareness is critical for most control and optimization approaches and is essential for detecting unintentional changes caused by faulty equipment or cyberattacks. In three phase electrical networks, phase identification---determining the phase label of each phase at each bus---is part of topology identification. Correct phase labels are important for several applications, including allocating resources to minimize phase imbalance \cite{sun2015phase}. Together, topology and phase identification determine the entire network connectivity.
The literature demonstrates several approaches to phase identification---many heuristic---on simulated or real data sets. In \cite{arya2011phase} and \cite{jayadev2016novel} power balance constraints on load measurements are used to identify customer phase connectivity; this approach suffers from multiple feasible solutions. Another class of techniques applies clustering to power or voltage measurements to categorize phases, as in \cite{blakely2019spectral} and \cite{wang2016phase}. Correlations are a popular clustering distance metric for voltage magnitudes \cite{olivier2018phase, short2012advanced, pezeshki2012consumer}. In \cite{wen2015phase}, voltage correlations are used with voltage angle differences to match phases at two buses. Though some of these methods achieve good performance, they don't provide theoretical guarantees, stymieing understanding of why and how they work, and if and when they fail. For example, clustering algorithms such as k-means \cite{macqueen1967some} can suffer from local minima that may result in incorrect solutions \cite{jain1999data}. Furthermore, a theoretical justification enables understanding how load and network characteristics, such as the radial structure and line impedances, impact the success of an approach. \cite{wang2020maximum} presents a maximum likelihood estimator for phase identification, but it requires knowledge of the grid topology and is non-convex.

Topology identification has been extensively explored in the literature. Several works use voltage correlations or covariances to recover topology, based on heuristics \cite{bariya2018data}, theoretically justified patterns in covariances \cite{bolognani2013identification, deka2020graphical}, or conditional independence tests  \cite{weng2016distributed,deka2016estimating}. Another class of approaches uses linear regression to estimates the entire network model, including topology and line impedances, but require phasor current measurements in addition to voltages \cite{moffat2019unsupervised, yuan2016inverse, yu2017patopa, cavraro2018graph, park2020learning}. All these approaches assume that the system is balanced and can be approximated by a single phase network. This may lead to errors estimating unbalanced distribution networks. \cite{liao2019unbalanced, deka2019topology} present three-phase topology estimation, possibly with incorrect phase labels, using conditional independence or mutual information tests. However the sample requirements of these methods often exceed those of direct/greedy methods such as \cite{deka2017structure}. Post-topology recovery, \cite{liao2019unbalanced} uses a distance-based metric or mixed-integer program to classify unknown phases. This still falls short of a \emph{theoretically guaranteed}, \emph{polynomial-time algorithm} for phase recovery.

\textbf{Contribution:} This work presents a \textbf{g}reedy algorithm for joint \textbf{p}hase and \textbf{t}opology identification---termed \textbf{\textit{GPT}}---surpassing the prior work in that it
\begin{itemize}
    \item is provably correct under realistic assumptions and runs in polynomial time. 
    \item is applicable to real networks, where some buses may have a subset of all three phases.
    \item requires \emph{only} voltage measurements; either phasors or magnitudes (with slight differences in performance). 
    \item utilizes voltage \emph{statistics} enabling successful phase identification even from voltage magnitudes alone and in the presence of phase shifting transformers \cite{verboomen2005phase}.
\end{itemize}
GPT is based on a linearized, multi-phase power flow model mapping nodal current injections to nodal voltage phasors, and requires radial structure and diagonal dominance of line impedance matrices to guarantee correctness. When phases are known, GPT reduces to provable, greedy multi-phase topology learning generalizing prior work in \cite{deka2017structure} for the balanced setting. When topology is known apriori, it reduces to a local approach to phase identification. We demonstrate GPT's performance and improvement over prior work in both phase and topology recovery on multiple IEEE test networks simulated in Open-DSS, an open source distribution system simulator \cite{dugan2011open}.

The paper is organized as follows. Section \ref{sec:Model} presents a linearized model for unbalanced, three-phase networks with missing phases which is the theoretical basis of our work. Section \ref{sec:Phase} theoretically justifies a proximity metric for phase identification while Section \ref{sec:Topology} justifies a distance metric for topology identification once phases have been identified. Section \ref{sec:jointPhase_Top} puts the parts together to propose GPT: an algorithm for joint phase and topology recovery. Finally, Section \ref{sec:Results} presents validation on non-linear voltage data for three IEEE distribution test networks, simulated in OpenDSS. We compare performance with algorithms in prior work, demonstrating that our approach outperforms the prior work, and is robust to non-ideal measurements. 
\begin{figure}
    \centering
    \includegraphics[width=.9\columnwidth]{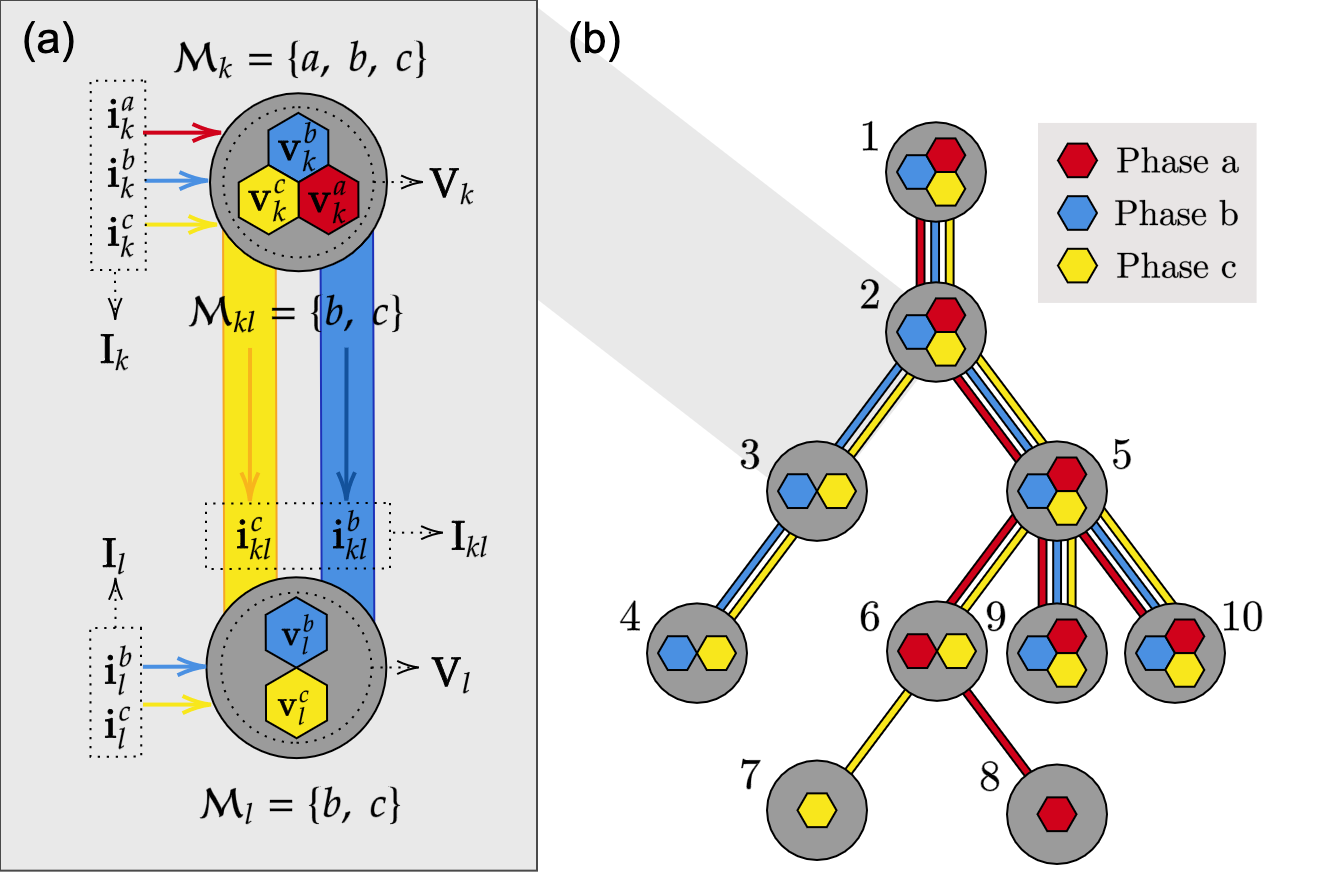}
    \caption{(a) Notation visualized for two nodes and connecting line. (b) \emph{ToyNet}: A toy network with three, two, and one phase nodes used as a running example.}
    \label{fig:example}
\end{figure}

\subsection{Notation}
Notation is summarized below and visualized in Fig. \ref{fig:example}a.

\begin{table}[h!]
\begin{tabularx}{\columnwidth}{XX|XX}
 $\mathcal{N}$ & Node set  & $\mathcal{E}$ & Edge set  \\
 $\mathbf{V}_k = \begin{bmatrix} \mathbf{v}_k^a\\ \mathbf{v}_k^b\\ \mathbf{v}_k^c\end{bmatrix}$ & Node $k$ voltages & $\mathbf{I}_k = \begin{bmatrix} \mathbf{i}_k^a\\ \mathbf{i}_k^b\\ \mathbf{i}_k^c\end{bmatrix}$  &  Node $k$ current injections\\
 $\mathcal{M}_k$ & Node $k$ phases  & $\mathcal{M}_{kl}$  & Line $kl$ phases \\
 $\mathbf{V}$ & All nodal voltages & $\mathbf{I}$ & All nodal injections\\
 $\mathbf{Y}_{kl}$ & Multiphase admittance matrix  of line $kl$ & $\mathbb{Y}$ & Network admittance matrix\\
 $\mathbb{Y}_{k,l}$ & $\mathbb{Y}$ block for nodes $k$ \& $l$ &  $\mathbb{Y}_{k,l}^{\phi,\psi}$ & $\mathbb{Y}$ element for phase $\phi$ at node $k$ \& phase $\psi$ at node $l$\\
 $A^T$ & Transpose & $A^H$ & Conjugate transpose
\end{tabularx}
\end{table}

\section{Unbalanced Three Phase Model}\label{sec:Model}
Before introducing our unbalanced three phase network model, we review the \emph{single phase balanced} power flow model. In the single phase case, each line $ij$ has an associated scalar admittance $\mathbf{y}_{ij}$ and impedance $\mathbf{z}_{ij} = \mathbf{y}^{-1}_{ij}$. The single phase voltage phasors and current injections are related by the $(|\mathcal{N}| \times |\mathcal{N}|)$ system \emph{admittance matrix} $\mathbb{Y}$ as $\mathbf{I} = \mathbb{Y}\mathbf{V}$ with the form: 
\begin{gather*}
    \mathbb{Y}_{i,j} = -\mathbf{y}_{ij},\ \mathbb{Y}_{i,i} = \sum_{ij \in \mathcal{E}} \mathbf{y}_{ij} \Rightarrow \mathbb{Y} = A\mathbf{D}A^T
\end{gather*}
$\mathbb{Y}$ can be factored into  $|\mathcal{N}|\times |\mathcal{E}|$ \emph{incidence} matrix $A$ and diagonal line impedance matrix $\mathbf{D}$ \cite{li1998laplacian}. Without loss of generality, we choose all edges to be directed away from the network ``root'', generally the point of common coupling (PCC) or substation. If edge $ij$ is oriented from $i$ to $j$, the corresponding elements of $A$ are
\begin{gather*}
    A_{i, ij} = 1,\ A_{j, ij} = -1, \ A_{r, ij} = 0 \text{~if~} r\neq i\neq j
\end{gather*}
where $A_{i, ij}$ is the $i^{th}$ element of the column corresponding to edge $ij$. 
By definition, as $\mathbb{Y}1 = 0$, $\mathbb{Y}$ is not invertible. An invertible reduced admittance matrix, $\underline{\mathbb{Y}}$, is constructed by choosing a reference node $r$ and removing the corresponding row and column of $\mathbb{Y}$. Since the system is lossless, its inverse relates voltages and currents as follows: 
\begin{align}
    \underline{\mathbb{Z}} = \underline{\mathbb{Y}}^{-1}, ~~ \underline{\mathbf{V}} = \underline{\mathbb{Z}}\underline{\mathbf{I}}
\end{align}
$\underline{\mathbf{V}}$  contains voltages differences to the reference voltage while $\underline{\mathbf{I}}$ contains current injections at non-reference nodes. Let $\mathcal{E}_i$ and $\mathcal{E}_j$ denote the edge sets on the unique path in the radial system to $r$ from nodes $i$ and $j$ respectively. The value of $\underline{\mathbb{Z}}_{i,j}$ is given by : 
\begin{align*}
    \underline{\mathbb{Z}}_{i,j} = \sum_{kl \in (\mathcal{E}_i \cap \mathcal{E}_j)} \mathbf{z}_{kl}. 
\end{align*}
Thus, the elements of $\underline{\mathbb{Z}}$ correspond to the impedances of common paths between node pairs and the reference \cite{moffat2019unsupervised,deka2017structure}.

The \emph{unbalanced three phase model} follows from the single phase one. To clarify definitions, we use \emph{ToyNet} (Fig. \ref{fig:example}b), a simple, unbalanced, three phase radial network, as a running example. 
We begin with the model for a multiphase line $ij$, with phases $\mathcal{M}_{ij} \subseteq \{a,b,c\}$. The voltage across $ij$ is related to the current along each phase of the line by line impedance matrix $\mathbf{Y}_{ij}$:
\begin{align}\label{eq:line_impedance}
    \mathbf{I}_{ij} = \mathbf{Y}_{ij}(\mathbf{V}_i - \mathbf{V}_j) 
\end{align}
$\mathbf{Y}_{ij}$ is the inverse of the {$(|\mathcal{M}_{ij}| \times |\mathcal{M}_{ij}|)$} line impedance matrix, $\mathbf{Z}_{ij}$. Eq. (\ref{eq:line_impedance}) for line $56$ in ToyNet is: {$\begin{bmatrix}\textbf{i}_{56}^a \\ \textbf{i}_{56}^c \end{bmatrix} = \mathbf{Y}_{56}\begin{bmatrix}\textbf{v}_5^a-\textbf{v}_6^a \\ \textbf{v}_5^c-\textbf{v}_6^c \end{bmatrix}$.} The node $i$ current injection, denoted $\mathbf{I}_i$, is a vector of injections on each phase of $i$, and is given by the sum of line flows: $\mathbf{I}_i = \sum_{ij \in \mathcal{E}} \mathbf{I}_{ij}$.
Building up from the current-voltage relations across individual lines in (\ref{eq:line_impedance}), the multi-phase voltages and currents injections across the network are related by $\mathbf{I} = \hat{\mathbb{Y}}\mathbf{V}$. Note this model can describe a network with a subset of phases at some nodes. $\hat{\mathbb{Y}}$ is the multi-phase system admittance matrix with dimensions $(\sum_{i \in \mathcal{N}} |\mathcal{M}_i|) \times (\sum_{i \in \mathcal{N}} |\mathcal{M}_i|)$. The $i,j$ \emph{block} of $\hat{\mathbb{Y}}$ is
\begin{align}\label{eq:3phaseYdef}
    \hat{\mathbb{Y}}_{i,j} = -\mathbf{Y}_{ij},\ \hat{\mathbb{Y}}_{i,i} = \sum_{ij \in \mathcal{E}} \mathbf{Y}_{ij} 
\end{align}
Block $\hat{\mathbb{Y}}_{i,j}$ is $(|\mathcal{M}_i|\times |\mathcal{M}_j|)$, so $\mathbf{Y}_{ij}$ must be appropriately zero-padded or reduced if $i$ and $j$ don't have all the same phases. For ToyNet, $\hat{\mathbb{Y}}$ has the structure visualized in Fig. \ref{fig:modelViz}.
\begin{remark}
$\hat{\mathbb{Y}}$ can be factored into an incidence matrix $\hat{A}$, which captures the endpoints of each edge, and $\hat{\mathbf{D}}$, a block diagonal matrix of line admittances: $\hat{\mathbb{Y}} = \hat{A}\hat{\mathbf{D}}\hat{A}^T.$
$\hat{\mathbf{D}}$ has dimensions $(\sum_{ij \in \mathcal{E}}|\mathcal{M}_{ij}|) \times (\sum_{ij \in \mathcal{E}}|\mathcal{M}_{ij}|)$, with line admittance matrices $\mathbf{Y}_{ij}$ along the diagonal. $\hat{A}$ is $(\sum_{i \in \mathcal{N}}|\mathcal{M}_i|) \times (\sum_{ij \in \mathcal{E}}|\mathcal{M}_{ij}|)$ dimensional. Its rows correspond to phases at each bus, and columns to phases of each edge. With edges directed toward the root, assume edge $ij \in \mathcal{E}$ is oriented from $i$ to $j$. Then for every $ij \in \mathcal{E}$, with $\phi\in\mathcal{M}_{ij}$: $\hat{A}_{i, ij}^{\phi\phi} = 1,\ \hat{A}_{j, ij}^{\phi\phi} = -1$.
All other elements of $\hat{A}$ are zero. $\hat{A}$ for ToyNet is visualized in Fig. \ref{fig:modelViz}.
\end{remark}
\begin{figure}
    \centering
    \includegraphics[width=\columnwidth]{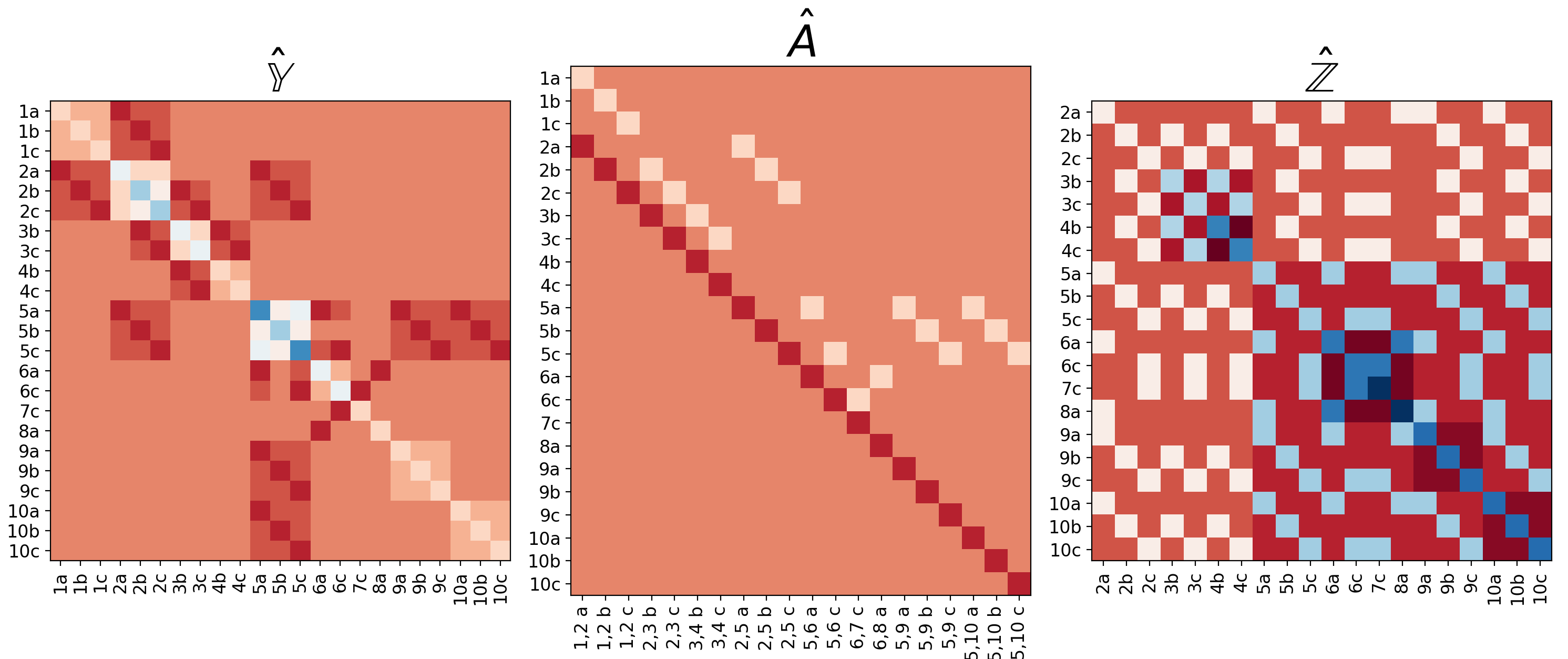}
    \caption{Visualizing the structure of three phase admittance, adjacency and impedance matrices for ToyNet.}
    \label{fig:modelViz}
\end{figure}

\subsection{Inverting the model}
$\hat{\mathbb{Y}}$ maps voltages to current injections, but we use the inverse mapping for phase and topology identification. By definition (\ref{eq:3phaseYdef}), $\hat{\mathbb{Y}}$ is singular. Again, a reduction denoted $\underline{\hat{\mathbb{Y}}}$ is invertable. To obtain $\underline{\hat{\mathbb{Y}}}$, we remove the \emph{three} rows and columns of $\hat{\mathbb{Y}}$ corresponding to the three phases at reference node $r$. $\underline{\hat{\mathbb{Y}}}$ can be factored as: $\underline{\hat{\mathbb{Y}}} = \underline{A}\hat{\mathbf{D}}\underline{A}^T$ where $\underline{A}$ is obtained from $\hat{A}$ by deleting the three rows corresponding to $r$.
To derive $\underline{\hat{\mathbb{Y}}}^{-1}$, we begin with the right pseudoinverse of $\underline{A}^T$, which has the following properties. 
\begin{lemma}\label{lemma:B}
Let $\underline{B}$ be the right pseudoinverse of $\underline{A}^T$, with rows corresponding to nodes and columns to edges. For $i \in \mathcal{N}$ with phases $\mathcal{M}_i$, let $\mathcal{E}_i$ be the edge set of the unique path to $r$. Then,
\begin{align*}
    \underline{B}_{i, kl}^{\phi\psi} =  
\begin{cases} -1 ~~\forall \phi =\psi \in \mathcal{M}_i,\ \forall kl \in \mathcal{E}_i\\ 0 \textnormal{~~~~ otherwise } \end{cases} 
\end{align*} 
\end{lemma}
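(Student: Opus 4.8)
The goal is to verify that the proposed matrix is a genuine right inverse, i.e. that $\underline{A}^T\underline{B}=I$, where $I$ has dimension equal to the number of edge-phases; this exhibits $\underline{B}$ as a right inverse, and since the edge-phase and node-phase counts coincide under the standing phase conventions, $\underline{A}^T$ is square and $\underline{B}$ is in fact its unique inverse. The plan is to prove the identity entrywise, exploiting the fact that phases decouple in $\hat{A}$: by construction $\hat{A}_{i,kl}^{\phi\psi}=0$ unless $\phi=\psi$, so after grouping rows and columns by phase $\underline{A}$ is block diagonal, with the $\phi$-block equal to the reduced node--edge incidence matrix of the subtree $T_\phi$ induced by the nodes and lines carrying phase $\phi$. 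Radiality together with the physical requirement that a phase present at a node be delivered from the root makes $T_\phi$ a genuine subtree containing $r$, so it suffices to check the claim one phase at a time, and the formula for $\underline{B}$ is then exactly the single-phase incidence inverse of each $T_\phi$.

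Fix an edge-phase $(kl,\psi)$ with $kl$ oriented away from $r$, so that $k$ is the parent and $l$ the child endpoint. The corresponding row of $\underline{A}^T$ has at most two nonzeros, namely $+1$ in column $(k,\psi)$ and $-1$ in column $(l,\psi)$, with the parent term omitted when $k=r$. A column $(mn,\psi)$ of $\underline{B}$ carries the entry $-1$ precisely in the rows $(i,\psi)$ for which $mn\in\mathcal{E}_i$, and $0$ elsewhere. Forming the product, every cross-phase entry vanishes automatically, and the within-phase entry reduces to
\begin{align*}
\big(\underline{A}^T\underline{B}\big)_{(kl,\psi),(mn,\psi)} = [mn\in\mathcal{E}_l]-[mn\in\mathcal{E}_k],
\end{align*}
where $[\cdot]$ is the indicator and the second term is absent (equivalently $\mathcal{E}_k=\emptyset$) when $k=r$. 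The key structural fact is the nesting of paths in a tree: since $l$ reaches $r$ by first traversing $kl$ and then following the path of $k$, one has the disjoint decomposition $\mathcal{E}_l=\{kl\}\sqcup\mathcal{E}_k$. Substituting collapses the difference to $[mn=kl]$, which is exactly the identity matrix.

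The telescoping above is routine; the care lies entirely in the bookkeeping of phases and of the reference node, which is where I expect the real work. First, one must confirm the claimed entries of $\underline{B}$ are even well defined: the entry $\underline{B}_{i,kl}^{\psi\psi}$ sits in a column indexed by the edge-phase $(kl,\psi)$, so one needs $\psi\in\mathcal{M}_{kl}$ whenever $\psi\in\mathcal{M}_i$ and $kl\in\mathcal{E}_i$ --- that is, phase $\psi$ must be carried by \emph{every} line on the path from $i$ to $r$. This path-phase-consistency property is precisely what makes $T_\phi$ a subtree and what licenses the per-phase decoupling, so it should be stated explicitly as a standing assumption. Second, the reference deletion must be threaded through consistently: each time an edge is incident to $r$ one term of the telescoping drops out, and this has to be reconciled with $\mathcal{E}_r=\emptyset$ so that the difference still collapses to $[mn=kl]$. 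Once these two points are pinned down, reassembling the three phase blocks reconstitutes $\underline{B}$ and establishes $\underline{A}^T\underline{B}=I$.
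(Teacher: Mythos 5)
Your proof is correct and follows essentially the same route as the paper's: both verify the product identity entrywise and collapse the telescoping difference of indicators using the tree fact $\mathcal{E}_l=\{kl\}\cup\mathcal{E}_k$ (the paper phrases this as: for any edge $mn\neq kl$, either both or neither of $\mathcal{E}_k,\mathcal{E}_l$ contain $mn$). The only substantive differences are that the paper additionally verifies $\underline{B}\,\underline{A}^T=I$ by an analogous row-inner-product computation (this second identity is what Theorem \ref{thm:Y_inv} actually invokes), which you instead obtain for free from your observation that $\underline{A}^T$ is square, and that your explicit flagging of path--phase consistency as the hypothesis making the entries of $\underline{B}$ well defined is a point the paper leaves implicit.
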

\begin{proof}
$\underline{B}_{i}^{\phi}$ is the row of $\underline{B}$ corresponding to phase $\phi$ at node $i$, while $\underline{B}_{,ij}^{,\phi}$ is the column corresponding to phase $\phi$ of edge $ij$. If $\phi \neq \psi$, $\underline{B}_{,ij}^{,\phi T}\underline{A}_{,kl}^{,\psi} =0$. A column of $\underline{A}$ has only two nonzero elements, so for $\phi=\psi$, we have
\begin{align*}
    \underline{B}_{,ij}^{,\phi T}\underline{A}_{,kl}^{,\phi} = -1\delta(ij\in\mathcal{E}_k) + 1\delta(ij\in\mathcal{E}_l)
\end{align*}
For any edge $kl \neq ij$, we will have either $(ij\in \mathcal{E}_k), (ij \in \mathcal{E}_l)$ \emph{or} $(ij\not\in \mathcal{E}_k), (ij \not\in \mathcal{E}_l)$. Thus $\underline{B}_{,ij}^{,\phi T}\underline{A}_{,kl}^{,\phi} =0$ for $kl \neq ij$. If $kl=ij$, we have $(ij\not\in \mathcal{E}_k), (ij \in \mathcal{E}_l)$ and $\underline{B}_{,ij}^{,\phi T}\underline{A}_{,ij}^{,\phi}=1$. Thus,  
$\underline{B}_{,ij}^{,\phi T}\underline{A}_{,kl}^{,\psi} = 1 \textnormal{ iff }  ij=kl, \phi = \psi \Rightarrow \underline{B}^T\underline{A} = I.$

Now consider, $\underline{B}_{i}^{\phi}\underline{A}_{j}^{\psi T}$, the inner product of rows. If $\phi\neq \psi$, this is $0$. Consider when $\phi=\psi$ and $i\neq j$. If $j$ does not lie along the path from $i$ to the reference, then $jk \not\in \mathcal{E}_i$, i.e. there is no edge connected to $j$ in $\mathcal{E}_i$, and $\underline{B}_{i}^{\phi}\underline{A}_{j}^{\phi T} = 0$. In contrast, if $j$ lies along the path from $i$ to the reference, there \emph{must} be two edges $kj, jl \in \mathcal{E}_i$ oriented to and away from $j$ respectively, as the path passes through $j$. Then, $\underline{B}_{i}^{\phi}\underline{A}_{j}^{\phi T} = (-1\times -1) + (1\times -1) = 0$. If $i = j$, only edge $il =jl \in \mathcal{E}_i$, and  $\underline{B}_{i}^{\phi}\underline{A}_{i}^{\phi T} = 1$. Therefore, $\underline{B}_{i}^{\phi}\underline{A}_{j}^{\psi T} = 1 \textnormal{ iff } i=j, \phi=\psi$  
\end{proof}
Therefore, ${\underline{\hat{\mathbb{Y}}}}^{-1} = \hat{\underline{\mathbb{Z}}}$ can be written as follows. 
\begin{theorem}\label{thm:Y_inv}
The inverse of $\underline{\hat{\mathbb{Y}}}$ is given by:
\begin{align}\label{eq:Zdef}
    \underline{\hat{\mathbb{Z}}} = {\underline{\hat{\mathbb{Y}}}}^{-1} = \underline{B}\hat{\mathbf{D}}^{-1}\underline{B}^{T}
\end{align}
where $\hat{\mathbf{D}}^{-1}$ is block diagonal matrix of line impedance matrices: $\hat{\mathbf{D}}^{-1}_{ij,ij} = \mathbf{Z}_{ij}$. Further, the element of $\underline{\hat{\mathbb{Z}}}$ corresponding to phase $\phi$ at node $i$ and phase $\psi$ at node $j$ is given by:
\begin{align}\label{eq:Zelements}
    \underline{\hat{\mathbb{Z}}}_{ij}^{\phi\psi} = \sum_{kl \in (\mathcal{E}_i \cap \mathcal{E}_j)}\mathbf{Z}_{kl}^{\phi\psi}
\end{align}
\end{theorem}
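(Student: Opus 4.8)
The plan is to verify the factorization (\ref{eq:Zdef}) by a direct computation, multiplying the proposed inverse $\underline{B}\hat{\mathbf{D}}^{-1}\underline{B}^T$ against $\underline{\hat{\mathbb{Y}}} = \underline{A}\hat{\mathbf{D}}\underline{A}^T$, and then to read off the entrywise formula (\ref{eq:Zelements}) from the explicit description of $\underline{B}$ already supplied by Lemma \ref{lemma:B}. First I would extract the two orthogonality facts I need from the proof of that lemma: it establishes $\underline{B}^T\underline{A} = I$ (inner products of columns) and $\underline{B}\,\underline{A}^T = I$ (inner products of rows). Transposing gives $\underline{A}^T\underline{B} = I$ and $\underline{A}\,\underline{B}^T = I$, which are exactly the cancellations I will invoke.

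With these in hand the matrix identity is mechanical. I would compute
\[
\underline{\hat{\mathbb{Y}}}\,\big(\underline{B}\hat{\mathbf{D}}^{-1}\underline{B}^T\big) = \underline{A}\hat{\mathbf{D}}\underline{A}^T\underline{B}\hat{\mathbf{D}}^{-1}\underline{B}^T = \underline{A}\hat{\mathbf{D}}\hat{\mathbf{D}}^{-1}\underline{B}^T = \underline{A}\,\underline{B}^T = I,
\]
using $\underline{A}^T\underline{B} = I$ at the second step, $\hat{\mathbf{D}}\hat{\mathbf{D}}^{-1} = I$ (valid since $\hat{\mathbf{D}}^{-1}$ is block diagonal with blocks $\mathbf{Z}_{ij} = \mathbf{Y}_{ij}^{-1}$) at the third, and $\underline{A}\,\underline{B}^T = I$ at the last. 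This establishes (\ref{eq:Zdef}).

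For the elementwise expression (\ref{eq:Zelements}), I would expand the $(\phi,\psi)$ entry of the $(i,j)$ block of $\underline{B}\hat{\mathbf{D}}^{-1}\underline{B}^T$ as a sum over edge and phase indices. By Lemma \ref{lemma:B}, the row of $\underline{B}$ at phase $\phi$ of node $i$ equals $-1$ precisely at phase $\phi$ of every edge $kl \in \mathcal{E}_i$ and vanishes elsewhere, and similarly for node $j$ at phase $\psi$. Since $\hat{\mathbf{D}}^{-1}$ couples an edge only with itself, contributing the block entry $\mathbf{Z}_{kl}^{\phi\psi}$, the only surviving terms have matching phases and a shared edge, each contributing $(-1)\,\mathbf{Z}_{kl}^{\phi\psi}\,(-1)$. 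Summing over $kl \in \mathcal{E}_i \cap \mathcal{E}_j$ yields exactly (\ref{eq:Zelements}).

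I expect this algebra to be painless; the one genuine subtlety I would take care over is justifying that $\underline{B}$ is a true two-sided inverse of $\underline{A}^T$ rather than merely a one-sided pseudoinverse, since the collapse $\underline{A}\,\underline{B}^T = I$ is used as well as $\underline{A}^T\underline{B} = I$. This hinges on the reduced per-phase incidence matrix being square, which presumes the implicit consistency that each node's phases are carried by every edge on its path to the reference (so that the number of phase-edges equals the number of non-reference phase-nodes). Lemma \ref{lemma:B} already encodes this through its two identities, so once both are in hand the theorem follows with no further hypotheses.
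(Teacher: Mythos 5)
Your proposal is correct and follows essentially the same route as the paper: both collapse the product of the factored forms using the two identities $\underline{B}^T\underline{A}=I$ and $\underline{B}\,\underline{A}^T=I$ established in Lemma \ref{lemma:B} (you merely verify $\underline{\hat{\mathbb{Y}}}\,\underline{\hat{\mathbb{Z}}}=I$ where the paper verifies $\underline{\hat{\mathbb{Z}}}\,\underline{\hat{\mathbb{Y}}}=I$), and both read off the entrywise formula (\ref{eq:Zelements}) from the explicit structure of $\underline{B}$. Your closing remark that $\underline{B}$ must be a two-sided inverse of $\underline{A}^T$ --- which rests on the implicit assumption that each node's phase set matches that of the edge feeding it, so the reduced incidence matrix is square --- is a point the paper leaves unstated but does not alter the argument.
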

\begin{proof}
Using the structure of $\underline{B}$ from Lemma \ref{lemma:B}, we have 
$
    \underline{\hat{\mathbb{Z}}}\underline{\hat{\mathbb{Y}}} = (\underline{B}\hat{\mathbf{D}}^{-1}\underline{B}^{T})(\underline{A}\hat{\mathbf{D}}\underline{A}^{T}) = I
$. Thus $ \underline{\hat{\mathbb{Z}}} = {\underline{\hat{\mathbb{Y}}}}^{-1}$.

Consider the block of $\underline{\hat{\mathbb{Z}}}$ corresponding to nodes $i$ and $j$. Based on Lemma \ref{lemma:B}:
\begin{align}
    \underline{\hat{\mathbb{Z}}}_{ij}^{\phi\psi} &= \sum_{kl \in \mathcal{E}} \underline{B}_{i,kl}^{\phi\phi}\mathbf{Z}_{kl}^{\phi\psi}\underline{B}_{j,kl}^{\psi\psi T} &= \sum_{kl \in (\mathcal{E}_i \cap \mathcal{E}_j)}\mathbf{Z}_{kl}^{\phi\psi}\nonumber
\end{align}
\end{proof}
Intuitively, (\ref{eq:Zelements}) says that a change in current injection on phase $\psi$ at node $j$ will affect the voltage at phase $\phi$ at node $i$, proportional to the $(\phi\psi)$ impedance of the shared path $(\mathcal{E}_i \cap \mathcal{E}_j)$ from $i,j$ to $r$. In our definition, $\underline{\hat{\mathbb{Y}}}$ and $\underline{\hat{\mathbb{Z}}}$ are ordered with the phases of each node or edge grouped together. If all phases exist at all nodes, this is equivalent to a permutation of the three-phase model in \cite{deka2019topology}, where entries for one phase across all nodes and edges are grouped together.

Using this theorem, in the unbalanced three phase model, voltages are related to currents as: 
\begin{align}\label{eq:3phaseOhms}
    \underline{\mathbf{V}} = \underline{\hat{\mathbb{Z}}}\underline{\mathbf{I}}
\end{align}
Here, $\underline{\mathbf{V}}$ contains nodal voltage \emph{differences} with the voltage for the \emph{matching phase} at the reference: $\underline{\mathbf{V}}_i^{\phi} = \mathbf{v}_i^{\phi}-\mathbf{v}_r^{\phi}$. $\underline{\mathbf{I}}$ contains all current injections except at the reference. Note that there is no assumption for all nodes having all three phases. In the next section, we will use the model of (\ref{eq:3phaseOhms}) to determine patterns in voltage statistics to enable phase and topology recovery.

\subsection{Current and Voltage Statistics}
We treat voltages as random variables driven by current via the model of (\ref{eq:3phaseOhms}). In our theoretical analysis, we assume
\begin{enumerate}[leftmargin=*]
\item current injections are uncorrelated across nodes and phases (including at a single node). \begin{align}\label{eq:assumption1}
        cov(\mathbf{i}_i^{\phi}, \mathbf{i}_j^{\psi}) \neq 0 \textnormal{ iff } (i=j) \cap (\phi=\psi)
    \end{align}
    \item current injections have equal variance at all nodes. 
    \begin{align}\label{eq:assumption2}
        \forall i, \phi:\ var(\mathbf{i}_i^{\phi}) = s^2
    \end{align}
\end{enumerate}
As they are predominantly determined by loads---which are uncorrelated over time intervals on the order of seconds---current injections can be modeled as uncorrelated across nodes and phases when using high resolution measurements such as from PMUs. We assume PMUs report at $120 Hz$, but our methods apply if resolution is sufficient for measurements to be de-trended to remove inter-nodal correlations. Assumption (\ref{eq:assumption2}) is stronger but permissible in reasonably balanced networks. In Section \ref{sec:Results}, we evaluate how deviations from these assumptions impact recovery performance. 

\section{Voltage Covariance for Phase Matching}\label{sec:Phase}
Voltages covariances are informative for phase identification. Under Assumptions (\ref{eq:assumption1},\ref{eq:assumption2}), the covariance of the voltage of phase $\phi$ at node $i$ and phase $\psi$ at node $j$, in the three-phase model (\ref{eq:3phaseOhms}), is given by: 
\begin{align}\label{eq:covarv}
    cov(\mathbf{v}_i^{\phi}, \mathbf{v}_j^{\psi}) = cov(\underline{\hat{\mathbb{Z}}}_i^{\phi} \underline{\mathbf{I}},\underline{\hat{\mathbb{Z}}}_j^{\psi}\underline{\mathbf{I}}) = s^2Re((\underline{\hat{\mathbb{Z}}}_i^{\phi})^H\underline{\hat{\mathbb{Z}}}_j^{\psi})
\end{align}
We are interested in the sum of covariances for a particular \emph{phase ordering} between nodes $i$ and $j$. Consider the case where the phases at $i$ are a subset of those at $j$ ($\mathcal{M}_i \subseteq \mathcal{M}_j$). Let $\mathcal{O}$ denote the ordering/permutation of phases at $j$, where $\mathcal{O}(\phi)$ denotes the specific phase at $j$ matched to the phase $\phi$ at $i$. Then, the covariance sum for matching $\mathcal{O}$, denoted by $c_{ij}^{\mathcal{O}}$, is:
\begin{align}\label{eq:distMetric}
    c_{ij}^{\mathcal{O}} = \sum_{\phi \in \mathcal{P}_i} cov(\mathbf{v}_i^{\phi}, \mathbf{v}_j^{\mathcal{O}(\phi)})
\end{align}
Let $\underline{\hat{\mathbb{Z}}}_i$ denote the rows of $\underline{\hat{\mathbb{Z}}}$ corresponding to \emph{all} phases at node $i$, and $\underline{\hat{\mathbb{Z}}}_j^{\mathcal{O}}$ denote the rows corresponding to the phases at $j$ ordered according to $\mathcal{O}$. Then $c_{ij}^{\mathcal{O}}$ is:
\begin{align}
    c_{ij}^{\mathcal{O}} &= s^2Re(vec(\underline{\hat{\mathbb{Z}}}_i)^{H}vec(\underline{\hat{\mathbb{Z}}}_j^{\mathcal{O}})) = s^2Re(\sum_{k \in \mathcal{N}}vec(\underline{\hat{\mathbb{Z}}}_{ik}^H)vec(\underline{\hat{\mathbb{Z}}}_{jk}^{\mathcal{O}}))\nonumber\\
    &= s^2Re\Bigg[\sum_{k\in\mathcal{N}}\smashoperator[r]{\sum_{\substack{mn\in(\mathcal{E}_i \cap \mathcal{E}_k)\\pq\in(\mathcal{E}_j\cap\mathcal{E}_k)}}}vec(\mathbf{Z}_{mn}^{\mathcal{M}_i})^Hvec(\mathbf{Z}_{pq}^{\mathcal{O}})\Bigg]\label{eq:C_ijexpand}
\end{align}
The last equality follows from (\ref{eq:Zelements}). The contribution of a node $k$ to $c_{ij}^{\mathcal{O}}$ is the dot product of the common path lengths between $i,k$ and $j,k$. The following result shows how $c_{ij}^{\mathcal{O}}$ enables phase matching.

\begin{theorem}\label{thm:maxCov}
Consider $c_{ij}^{\mathcal{O}}$ given in (\ref{eq:distMetric}) for $\mathcal{M}_i \subseteq \mathcal{M}_j$. If condition (\ref{eq:lineCond}) holds for each pair of line impedance matrices, then $c_{ij}^{\mathcal{O}}$ is maximized when $\mathcal{O}$ corresponds to the correct phase matching between $i$ and $j$.
\begin{align}
    \forall \mathcal{M} \in \{\mathcal{M}_1,...,\mathcal{M}_n\},\ \forall st, kl \in \mathcal{E}:\nonumber\\ \mathcal{M} = \arg\max_{\mathcal{O}} Re\bigg[vec(\mathbf{Z}_{st}^{\mathcal{M}})^Hvec(\mathbf{Z}_{kl}^{\mathcal{O}(\mathcal{M})})\bigg]\label{eq:lineCond}
\end{align}
where $\mathcal{M}$ ranges over every nodal phase set $\mathcal{M}_i$.
\end{theorem}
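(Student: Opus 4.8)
The plan is to reduce the global maximization over phase orderings $\mathcal{O}$ to the per-edge-pair hypothesis (\ref{eq:lineCond}) by exploiting the additive structure of $c_{ij}^{\mathcal{O}}$ already exposed in (\ref{eq:C_ijexpand}). First I would take the expansion (\ref{eq:C_ijexpand}) as the starting point. Since $s^2 > 0$ is a fixed positive constant and $Re(\cdot)$ is linear, I would pull both outside the double sum and write $c_{ij}^{\mathcal{O}} = s^2\sum_{k\in\mathcal{N}}\sum_{mn,pq} f_{mn,pq}(\mathcal{O})$, where each summand $f_{mn,pq}(\mathcal{O}) := Re\big[vec(\mathbf{Z}_{mn}^{\mathcal{M}_i})^H vec(\mathbf{Z}_{pq}^{\mathcal{O}})\big]$ depends on the ordering only through the single term indexed by the edge pair $(mn,pq)$, with the inner sum ranging over $mn \in (\mathcal{E}_i \cap \mathcal{E}_k)$ and $pq \in (\mathcal{E}_j \cap \mathcal{E}_k)$.

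The key observation is that each summand $f_{mn,pq}$ is exactly an instance of the quantity appearing in hypothesis (\ref{eq:lineCond}): setting the reference phase set $\mathcal{M} = \mathcal{M}_i$ (a valid choice, since $\mathcal{M}_i$ is one of the nodal phase sets over which (\ref{eq:lineCond}) is assumed to range) and the two edges $st = mn$, $kl = pq$, the hypothesis states that the \emph{correct} phase matching $\mathcal{M}_i$---the identity correspondence between the true phase labels of $i$ and $j$---is the maximizer of $f_{mn,pq}(\mathcal{O})$ over all orderings $\mathcal{O}$. Crucially, this maximizer is the \emph{same} correct matching for every edge pair $(mn,pq)$ that can arise in the double sum, because (\ref{eq:lineCond}) is assumed to hold for all pairs $st, kl \in \mathcal{E}$ simultaneously.

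With this in hand the conclusion is an elementary consequence: letting $\mathcal{O}^\star$ denote the correct matching, we have $f_{mn,pq}(\mathcal{O}) \le f_{mn,pq}(\mathcal{O}^\star)$ for every $\mathcal{O}$ and every edge pair, so summing these inequalities with nonnegative weights (unit multiplicities, scaled by $s^2 > 0$) yields $c_{ij}^{\mathcal{O}} \le c_{ij}^{\mathcal{O}^\star}$ for all $\mathcal{O}$, i.e. $c_{ij}^{\mathcal{O}}$ is maximized by the correct matching. I expect the main obstacle to be not the final summation---which is routine---but rather the bookkeeping needed to justify that each restriction $\mathbf{Z}_{mn}^{\mathcal{M}_i}$ to the phases $\mathcal{M}_i$ is well-defined along the shared path: because $mn \in \mathcal{E}_i$ lies on the unique path from $i$ to the root of the radial network, every such edge must carry at least the phases $\mathcal{M}_i$, so the restriction is meaningful and the reference phase set is consistently $\mathcal{M}_i$ across all terms. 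Verifying this phase-consistency down the tree, and confirming that a single global permutation $\mathcal{O}$ applied uniformly cannot outperform the common maximizer on any individual term, is the only point requiring care.
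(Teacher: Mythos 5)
Your proposal is correct and follows essentially the same route as the paper: the paper's proof simply observes that under condition (\ref{eq:lineCond}) every term in the expansion (\ref{eq:C_ijexpand}) is individually maximized by $\mathcal{O} = \mathcal{M}_i$, so the sum is as well. Your version adds useful detail the paper leaves implicit---notably the phase-consistency of $\mathbf{Z}_{mn}^{\mathcal{M}_i}$ along the shared path---but the argument is the same termwise-maximization-then-sum.
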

\begin{proof} If (\ref{eq:lineCond}) holds, \emph{every} term in the summation in (\ref{eq:C_ijexpand}) is maximized by $\mathcal{O} = \mathcal{M}_i$. Therefore, $\mathcal{M}_i$ maximizes the sum, and $\mathcal{M}_i = \arg\max_{\mathcal{O}} c_{ij}^{\mathcal{O}}$
\end{proof}
(\ref{eq:lineCond}) is a condition on every pair of edges, $st$ and $kl$, in the network. It states that for every row subset of line impedance matrix $\mathbf{Z}_{st}$ (corresponding to each nodal phase set $\mathcal{M}_1,...,\mathcal{M}_n$), the matching rows of $\mathbf{Z}_{kl}$ produce the largest vectorized dot product. This is reasonable as real line impedance matrices are diagonally dominant. Condition (\ref{eq:lineCond}) depends on the particular network considered. In a network where all nodes have all three phases ($\mathcal{M}=\mathcal{M}_i = \{a,b,c\}$) the condition on the vectorized dot product involves the full three phase line impedance matrices. if some nodes have a subset of phases, it will involve sub-matrices of impedance matrices.
Note that, in general, $cov(\mathbf{v}_i^{\phi}, \mathbf{v}_j^{\psi}) > 0$ even if $\phi \neq \psi$. If node $i$ has phases $a,b$ and node $j$ has phases $a,c$ ($\mathcal{M}_i \not\subseteq \mathcal{M}_j$), minimizing $c_{ij}$ will incorrectly match $b, c$. GPT avoids such scenarios by ordering nodes; discussed later.
In summary, Theorem \ref{thm:maxCov} allows us to use $c_{ij}^{\mathcal{O}}$ as a proximity metric for phase matching. 

\section{Voltage Difference Variances for Topology}\label{sec:Topology}
We use voltage difference variances for topology recovery. Define $d_{ij}$ to be the sum of the variance of the voltage differences between correctly matched phases of nodes $i,j$. Assuming $\mathcal{M}_i \subseteq \mathcal{M}_j$, $d_{ij}$ is: 
\begin{align}\label{eq:sovDist}
     d_{ij} = \smashoperator[lr]{\sum_{\phi \in \mathcal{M}_i}}var(\mathbf{v}_i^{\phi}-\mathbf{v}_j^{\phi}) = \smashoperator[lr]{\sum_{\phi \in \mathcal{M}_i}}\mathbb{E}[(\mathbf{v}_i^{\phi}-\mathbf{v}_j^{\phi})-\mathbb{E}(\mathbf{v}_i^{\phi}-\mathbf{v}_j^{\phi})]^2 
\end{align}
Lemma \ref{lem:minVarDiff} establishes trends in $d_{ij}$ along one phase.
\begin{lemma}\label{lem:minVarDiff}
Given the voltage on phase $\phi$ at node $i$:
\begin{align}\label{eq:minDist}
    \arg\min_{j} d_{ij}^{\phi} \triangleq 
    \arg\min_{j} var(\mathbf{v}_i^{\phi} - \mathbf{v}_j^{\phi}) \in \textnormal{Parent/Child of } i
\end{align}
\end{lemma}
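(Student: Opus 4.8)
The plan is to represent the phase-$\phi$ voltage difference as a telescoping sum of per-edge voltage drops along the tree, and then exploit the independence of current injections to split its variance into mutually uncorrelated pieces. First I would use the radial structure to write, for any edge $mn\in\mathcal{E}$ directed from child $m$ to parent $n$, the drop $\mathbf{v}_m^{\phi}-\mathbf{v}_n^{\phi}=\sum_{\psi}\mathbf{Z}_{mn}^{\phi\psi}\mathbf{J}_m^{\psi}$, where $\mathbf{J}_m^{\psi}=\sum_{k\in\mathcal{T}_m}\mathbf{i}_k^{\psi}$ is the aggregate injection of the subtree $\mathcal{T}_m$ hanging below $m$; this is the current-summation form of (\ref{eq:line_impedance}) and is consistent with the common-path expression (\ref{eq:Zelements}). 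Telescoping these drops up to the reference gives $\mathbf{v}_i^{\phi}-\mathbf{v}_r^{\phi}=\sum_{e\in\mathcal{E}_i}\Delta_e^{\phi}$, with $\Delta_e^{\phi}$ the drop on edge $e$, so that for any two nodes $\mathbf{v}_i^{\phi}-\mathbf{v}_j^{\phi}=\sum_{e\in\mathcal{E}_i\setminus\mathcal{E}_j}\Delta_e^{\phi}-\sum_{e\in\mathcal{E}_j\setminus\mathcal{E}_i}\Delta_e^{\phi}$, i.e.\ a signed sum of the drops on the unique $i$--$j$ path.

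The central observation is that, under Assumption (\ref{eq:assumption1}), two edge drops are uncorrelated whenever their subtrees are disjoint: $cov(\Delta_{e_1}^{\phi},\Delta_{e_2}^{\phi})$ is a weighted sum of the $cov(\mathbf{i}_k^{\psi},\mathbf{i}_{k'}^{\psi'})$ over $k\in\mathcal{T}_{m_1}$, $k'\in\mathcal{T}_{m_2}$, which vanishes term-by-term when $\mathcal{T}_{m_1}\cap\mathcal{T}_{m_2}=\emptyset$. Let $w$ be the least common ancestor of $i$ and $j$. The $i$-side drops (on the path $i\to w$) and the $j$-side drops (on $w\to j$) have child-ends lying in two distinct subtrees of $w$, hence have disjoint subtrees, so the two groups are mutually uncorrelated. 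Therefore
\begin{align*}
    var(\mathbf{v}_i^{\phi}-\mathbf{v}_j^{\phi}) = var(\mathbf{v}_i^{\phi}-\mathbf{v}_w^{\phi}) + var(\mathbf{v}_w^{\phi}-\mathbf{v}_j^{\phi}) \geq var(\mathbf{v}_i^{\phi}-\mathbf{v}_w^{\phi}),
\end{align*}
which reduces the problem to the ancestor/descendant chain of $i$: any $j$ lying off this chain has variance at least that of the ancestor $w$, so once chain monotonicity is established the overall minimizer must lie on the chain.

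It then remains to show monotonicity along a single chain, namely that lengthening the path away from $i$ by one edge cannot decrease the variance. Writing $X=\mathbf{v}_i^{\phi}-\mathbf{v}_a^{\phi}$ for the drop over a chain from $i$ to an ancestor (resp.\ descendant) $a$, and $Y=\Delta_{e_0}^{\phi}$ for the drop on the next edge $e_0$, I need $var(X+Y)=var(X)+var(Y)+2\,cov(X,Y)\geq var(X)$, i.e.\ $cov(X,Y)\geq 0$. Here the subtrees are nested rather than disjoint, so this covariance does not vanish; by the formula above it equals a positive multiple of $\sum_{e}|\mathcal{T}_{m_e}|\,Re\big[(\mathbf{Z}_{e}^{\phi,:})^{H}\mathbf{Z}_{e_0}^{\phi,:}\big]$, summed over the edges $e$ of the $i$--$a$ chain, i.e.\ a sum of real parts of Hermitian inner products of phase-$\phi$ rows of line impedance matrices. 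This sign is the crux and the main obstacle: I would control it using diagonal dominance of the impedance matrices together with the physical fact that series impedances lie in a common quadrant ($R,X>0$), which makes each such inner product dominated by the positive diagonal term $Re[\overline{\mathbf{Z}_{e}^{\phi\phi}}\mathbf{Z}_{e_0}^{\phi\phi}]$ and hence nonnegative; this is the same structural hypothesis underlying condition (\ref{eq:lineCond}). Granting $cov(X,Y)\geq0$, the variance is nondecreasing as $j$ moves away from $i$ along any chain, so the minimum over all $j\neq i$ is attained at distance one, i.e.\ at a parent or child of $i$, establishing (\ref{eq:minDist}).
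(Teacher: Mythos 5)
Your argument is correct in outline but takes a genuinely different route from the paper's. The paper works entirely with the deterministic coefficients of the inverse model: it expands $d_{ij}^{\phi}=\sum_{n,\psi}s_n^2\,|\underline{\hat{\mathbb{Z}}}_{in}^{\phi\psi}-\underline{\hat{\mathbb{Z}}}_{jn}^{\phi\psi}|^2$, substitutes the common-path impedances $\mathbf{e}_n^{\phi\psi}$, and shows $d_{ik}^{\phi}-d_{ij}^{\phi}>0$ by partitioning the node set into regions (Fig.~\ref{fig:AllRegions}) and checking each region's contribution separately, in two cases ($j$ the branch point of $i$ and $k$, and $j$ between $i$ and $k$ on a chain). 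You instead decompose the random variable itself into per-edge drops driven by aggregate subtree injections, use Assumption (\ref{eq:assumption1}) to get exact variance additivity across the least common ancestor $w$ --- which disposes of the paper's Case~A in one line and is arguably cleaner --- and then reduce everything to monotonicity along a single ancestor/descendant chain. The two arguments meet at the same crux: your condition $cov(X,Y)\ge 0$, i.e.\ $\sum_{\psi}Re\big[\overline{\mathbf{Z}_{e}^{\phi\psi}}\,\mathbf{Z}_{e_0}^{\phi\psi}\big]\ge 0$, is precisely what the paper needs (but does not state) for the $\mathcal{N}_5,\mathcal{N}_6$ terms of its Case~B, since $|\mathbf{e}_n^{\phi\psi}-\mathbf{e}_k^{\phi\psi}|^2-|\mathbf{e}_n^{\phi\psi}-\mathbf{e}_j^{\phi\psi}|^2$ equals the squared magnitude of the $j$-to-$k$ path impedance plus twice exactly such a cross term. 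So you have correctly isolated the one nontrivial sign condition rather than introduced a new obstacle; do note, however, that diagonal dominance alone does not quite deliver it --- you also need the diagonal entries of distinct line impedance matrices to have similar arguments, so that $Re\big[\overline{\mathbf{Z}_{e}^{\phi\phi}}\mathbf{Z}_{e_0}^{\phi\phi}\big]$ is close to the product of magnitudes and can absorb the off-diagonal contributions; this ``common quadrant'' fact about physical line impedances is implicit in the paper as well. One bookkeeping point: your weights $|\mathcal{T}_{m_e}|$ presume equal injection variances; with unequal variances they become $\sum_{k\in\mathcal{T}_{m_e}}s_k^2$, still positive, consistent with the paper's remark that the lemma survives relaxing Assumption (\ref{eq:assumption2}).
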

\begin{proof}
Expanding the difference, we obtain:
\begin{align}\label{eq:varsum}
    d_{ij}^{\phi} = var(\mathbf{v}_i^{\phi}-\mathbf{v}_j^{\phi}) = \sum_{n\in\mathcal{N}}\sum_{\psi\in\mathcal{M}_n}s_n^2 |\underline{\hat{\mathbb{Z}}}_{in}^{\phi\psi} - \underline{\hat{\mathbb{Z}}}_{jn}^{\phi\psi}|^2
\end{align}
where $s_n$ is the injection variance at node $n$. If the paths from nodes $k$ and $l$ to $r$ merge at node $n$, $\underline{\hat{\mathbb{Z}}}_{k,l}^{\phi,\psi} = \mathbf{e}_n^{\phi\psi}$, the impedance of the path from $n$ to $r$ along phase coupling $\phi, \psi$: 
\begin{align}\label{eq:path}
\mathbf{e}_n^{\phi\psi} = \sum_{ij \in \mathcal{E}_n} \mathbf{Z}_{ij}^{\phi\psi}
\end{align}
To determine the minimizer of (\ref{eq:minDist}), consider two cases visualized in Fig. \ref{fig:AllRegions}. In case A, $j$ is the common ancestor of node $i,k$ on the path to the root. In case B, $j$ is an ancestor of $i$, while $k$ is an ancestor of $j$. In both cases, we show that $d_{ij}^{\phi},d_{jk}^{\phi} < d_{ik}^{\phi}$. Put together, for a given $i$, the minimizer $j$ of $d_{ij}^{\phi}$ is either the parent or child of $i$.

\textbf{Case A.} We split the sum in (\ref{eq:varsum}) into the regions $\mathcal{N}_i$ in Fig. \ref{fig:AllRegions}a. Using (\ref{eq:path}) in (\ref{eq:varsum}) for each region, we have
\begin{align*}
    d^{\phi}_{ik}-d_{ij}^{\phi} = 
    \sum_{n \in \mathcal{N}_1,\psi \in \mathcal{M}_n} 0 
    + \sum_{n \in \mathcal{N}_2,\psi \in \mathcal{M}_n}0 + \sum_{n \in \mathcal{N}_4,\psi \in \mathcal{M}_n}0\\ 
    + \sum_{n \in \mathcal{N}_3,\psi \in \mathcal{M}_n}s_n^2\bigg(|\mathbf{e}_j^{\phi\psi}-\mathbf{e}_n^{\phi\psi}|^2-|\mathbf{e}_j^{\phi\psi}-\mathbf{e}_j^{\phi\psi}|^2\bigg)\\
    + \sum_{n \in \mathcal{N}_5,\psi \in \mathcal{M}_n}s_n^2\bigg(|\mathbf{e}_j^{\phi\psi}-\mathbf{e}_k^{\phi\psi}|^2-|\mathbf{e}_j^{\phi\psi}-\mathbf{e}_j^{\phi\psi}|^2\bigg) > 0
\end{align*}
A similar argument shows $ d_{ik}^{\phi}-d_{kj}^{\phi}> 0$.\\
\textbf{Case B.} Now we split (\ref{eq:varsum}) over the regions in Fig. \ref{fig:AllRegions}b. Using (\ref{eq:path}), we have
\begin{align*}
    d_{ik}^{\phi}-d_{ij}^{\phi} = 
    \sum_{n \in \mathcal{N}_1,\psi \in \mathcal{M}_n}0
    + \sum_{n \in \mathcal{N}_2,\psi \in \mathcal{M}_n}0\\ 
    + \sum_{n \in \mathcal{N}_3,\psi \in \mathcal{M}_n}s_n^2\bigg(|\mathbf{e}_n^{\phi\psi}-\mathbf{e}_k^{\phi\psi}|^2-|\mathbf{e}_n^{\phi\psi}-\mathbf{e}_n^{\phi\psi}|^2\bigg)\\
    + \sum_{n \in \mathcal{N}_4,\psi \in \mathcal{M}_n}s_n^2\bigg(|\mathbf{e}_j^{\phi\psi}-\mathbf{e}_k^{\phi\psi}|^2 - |\mathbf{e}_j^{\phi\psi}-\mathbf{e}_j^{\phi\psi}|^2\bigg)\\ 
    + \sum_{n \in \mathcal{N}_5,\psi \in \mathcal{M}_n}s_n^2\bigg(|\mathbf{e}_n^{\phi\psi}-\mathbf{e}_k^{\phi\psi}|^2-|\mathbf{e}_n^{\phi\psi}-\mathbf{e}_j^{\phi\psi}|^2\bigg)\\
    + \sum_{n \in \mathcal{N}_6,\psi \in \mathcal{M}_n}s_n^2\bigg(|\mathbf{e}_i^{\phi\psi}-\mathbf{e}_k^{\phi\psi}|^2-|\mathbf{e}_i^{\phi\psi}-\mathbf{e}_j^{\phi\psi}|^2\bigg) > 0
\end{align*}
A similar analysis shows $d_{ik}^{\phi}-d_{kj}^{\phi}> 0$. Thus the minimum is given by the parent/child of $i$. 
\end{proof}
\begin{figure}
    \centering
    \includegraphics[width=0.9\columnwidth]{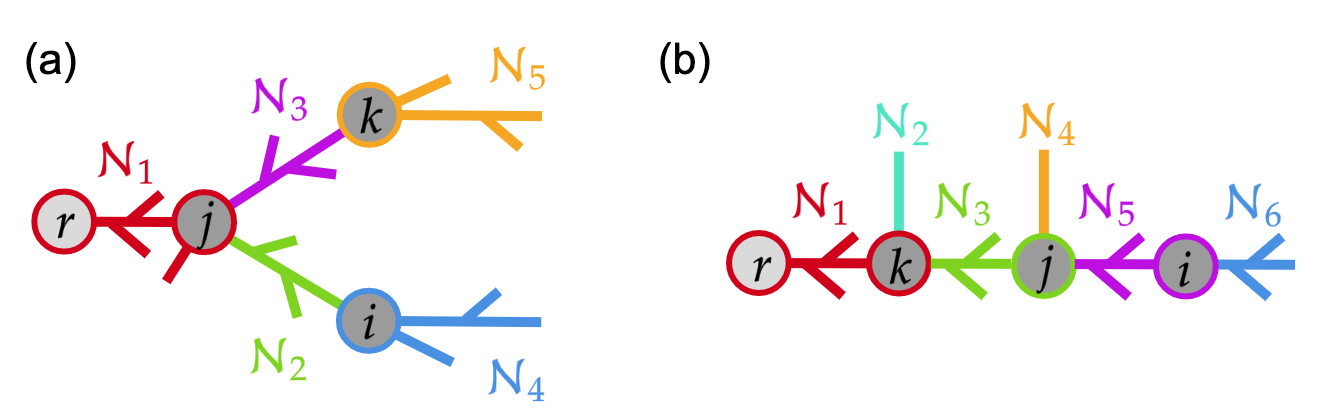}
    \caption{Regions of the radial network when (a) Case A: $k$ lies off the path between $i$ and the reference. (b) Case B: $i$ lies along the path from $j$ to the reference.}
    \label{fig:AllRegions}
\end{figure}Applying Lemma \ref{lem:minVarDiff} to all matched phases between two nodes gives the following result:
\begin{theorem}\label{thm:topology}
Given node $i$, the node $j$ which minimizes $d_{ij}$ in (\ref{eq:sovDist}) is either a parent or child of $i$.
\end{theorem}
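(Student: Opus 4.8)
The plan is to reduce the multi-phase statement to the single-phase Lemma \ref{lem:minVarDiff} by exploiting the additive structure of the distance. By (\ref{eq:sovDist}), $d_{ij} = \sum_{\phi \in \mathcal{M}_i} d_{ij}^{\phi}$ is a sum of the per-phase variances already analyzed in the lemma. Since each summand is controlled independently by Lemma \ref{lem:minVarDiff}, if I can show that all summands are \emph{simultaneously} decreased by moving $j$ toward $i$ in the tree, the sum inherits the same monotonicity and its minimizer is forced to be a neighbor of $i$.

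First I would restate the content of Lemma \ref{lem:minVarDiff} in the stronger, directional form that its proof actually establishes. For a fixed phase $\phi$ and any node $j$ that is neither the parent nor a child of $i$, let $j'$ be the neighbor of $i$ on the unique tree path from $i$ to $j$ (the parent of $i$ when $j$ is an ancestor or lies in a foreign subtree, and the child of $i$ toward $j$ when $j$ is a descendant). The region decompositions of Cases A and B, together with (\ref{eq:varsum}) and (\ref{eq:path}), yield the strict inequality $d_{ij'}^{\phi} < d_{ij}^{\phi}$. The essential observation is that this improving neighbor $j'$ is dictated purely by the tree topology: the partition of $\mathcal{N}$ into the regions of Fig. \ref{fig:AllRegions} is phase-independent, so the same $j'$ improves every phase $\phi \in \mathcal{M}_i$ at once.

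Given this, the theorem follows by summation: for any non-adjacent $j$,
\begin{align*}
 d_{ij'} = \sum_{\phi \in \mathcal{M}_i} d_{ij'}^{\phi} < \sum_{\phi \in \mathcal{M}_i} d_{ij}^{\phi} = d_{ij},
\end{align*}
so no non-adjacent node can minimize $d_{ij}$, leaving only a parent or child of $i$. I would also note that the matched-phase distance (\ref{eq:sovDist}) is well defined at $j'$: in a radial grid a node inherits its phases from its parent line, so the parent of $i$ satisfies $\mathcal{M}_i \subseteq \mathcal{M}_{j'}$, and the child case is handled symmetrically by exchanging the roles of the two nodes.

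The main obstacle is the simultaneity argument in the second step. Lemma \ref{lem:minVarDiff} by itself only guarantees that each phase is minimized at \emph{some} neighbor of $i$; a priori, different phases could favor different neighbors, in which case no single $j$ would minimize the sum and the conclusion could fail. The crux is therefore to make explicit that the direction of improvement---toward $i$ along the tree---is common to all phases, precisely because Cases A and B depend only on the topological regions and not on the phase-coupled path impedances $\mathbf{e}_n^{\phi\psi}$, which influence merely the magnitude of each strictly positive gap. Once this uniform-direction claim is secured, the summation step is immediate.
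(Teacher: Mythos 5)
Your proof is correct and follows the same route as the paper, whose entire argument is the single sentence that the result follows by ``applying Lemma \ref{lem:minVarDiff} to all matched phases.'' You in fact go further than the paper: you correctly identify that the lemma's bare statement (each phase has \emph{some} parent/child minimizer) is insufficient for the summation, and you supply the missing phase-uniform directionality argument---that the improving neighbor $j'$ is dictated by the topological region decomposition alone and is therefore common to every $\phi \in \mathcal{M}_i$---which is precisely what the paper leaves implicit.
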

If phases at each node are known, Thm. \ref{thm:topology} enables correct topology recovery with a greedy algorithm based on distance $d_{ij}$. Note that Lemma \ref{lem:minVarDiff} and Thm. \ref{thm:topology} hold for all uncorrelated injections even with unequal variances. Thus, Assumption \ref{eq:assumption2} can be relaxed for topology learning. We now have the tools for joint phase and topology recovery, detailed in the next section.

\section{Joint Phase \& Topology Identification}\label{sec:jointPhase_Top}
We propose GPT (Alg. \ref{alg:greedy}): a \emph{greedy} algorithm for joint phase and topology identification based on the nodal voltage properties of Sections \ref{sec:Phase}-\ref{sec:Topology}. GPT computes $c_{ij}^{\mathcal{O}}$'s (\ref{eq:covarv}) exhaustively (for all matching options), selecting maxima for phase matching (Theorem \ref{thm:maxCov}). Based on phase matchings, it computes $d_{ij}$'s (\ref{eq:sovDist}) exhaustively (for all node pairs), selecting minima for topology recovery (Theorem \ref{thm:topology}). GPT greedily builds a tree with node set $\mathcal{T}$, starting from node $i_0$ and iterating till all nodes have been added. In each iteration, a new node is added to the tree by choosing node $i \not\in \mathcal{T}$, which has the minimum value of $d_{ij}$ for all $j \in \mathcal{T}$, using the \emph{getNext} algorithm. When node $i$ is added to the tree, it's phases are determined based on their matching to the phases of node $j$. 
\begin{algorithm}
\caption{\emph{GPT} Greedy topology and phase recovery}\label{alg:greedy}
\textbf{Input:} Multi-phase voltage time series for all nodes in $V$.\\
\textbf{Output:} $E$: network edge set, $P$: phase ordering of each node. \\
\begin{algorithmic}[1]
\State $N3,N2,N1 \gets$ set of three, two, one phase nodes in $V$.
\ForAll{$i \neq j \in N$}
\State $\forall \mathcal{O}$ compute $c_{ij}^{\mathcal{O}}$
\State $M[i, j] \leftarrow \arg\max c_{ij}^{\mathcal{O}}$ in (\ref{eq:distMetric}) \Comment{Phase matching}
\State $D[i, j] \leftarrow$ $d_{ij}$ in (\ref{eq:sovDist}) with matched $M[i, j]$ 
\EndFor
\State $T  \leftarrow \{i_0\}, N3 \leftarrow N3 \setminus i_0$,
 $P[i_0] \leftarrow [a, b, c]$ \Comment{Add first three phase node to tree $T$ and set phases}
\While{$N3 \not= \emptyset$}\Comment{Connect 3 phase nodes}
 \State $i$, $j$ $\leftarrow$ getNext($D$, $T$, $N3$)
 \State $T \leftarrow T \cup i, N3 \leftarrow N3 \setminus i$, $E \leftarrow E \cup e_{ij}$, {$P[i] \leftarrow M[i, j]$}
 \EndWhile
 \While{$N2 \not= \emptyset$}\Comment{Connect 2 phase nodes}
 \State $i$, $j$ $\leftarrow$ getNext($D$, $T$, $N2$)
 \State $T \leftarrow T \cup i$, $N2 \leftarrow N2 \setminus i$, $E \leftarrow E \cup e_{ij}$, {$P[i] \leftarrow M[i, j]$}
 \EndWhile
 \While{$N1 \not= \emptyset$}\Comment{Connect 1 phase nodes}
 \State $i$, $j$ $\leftarrow$ getNext($D$, $T$, $N1$)
 \State $T \leftarrow T \cup i$, $N1 \leftarrow N1 \setminus i$, $E \leftarrow E \cup e_{ij}$, 
 {$P[i] \leftarrow M[i, j]$}
 \EndWhile
\end{algorithmic}
\end{algorithm}
\begin{algorithm}\label{alg:getNext}
\caption{\emph{getNext}}
\textbf{Input:} Pairwise distances in $D$, Added nodes in $T$, Nodes to add in $N$\\
\textbf{Output:} $i \in N$ to be added, $j \in T$ connected to $i$. \\
\begin{algorithmic}[1]
\State $d_{ij} \leftarrow \infty$, $i \leftarrow$ None, $j \leftarrow$ None\;
\ForAll{$b \in T, a \in N$}
\If{$D[a, b] < d_{ij}$}
\State $d_{ij} \leftarrow D[a, b], i \leftarrow a, j \leftarrow b$
\EndIf
\EndFor
\end{algorithmic}
\end{algorithm}
GPT adds $3$ phase, then $2$ phase, then $1$ phase nodes to the tree. The initial node must be three phase, making the reference an intuitive choice. By adding nodes in this order, GPT implicitly enforces the crucial fact that number of phases never increases moving from the substation to the network ends (a single phase node is never the parent of a three phase node) and avoids issues that can arise when applying a naive greedy algorithm to a network with a variable number of phases at each node. For example, suppose we are recovering the topology of \emph{ToyNet}. All nodes have been added to $\mathcal{T}$ except $6$, $7$, and $8$. To recover the correct topology, we should connect node $6$ to $5$ first. Then nodes $7$ and $8$ will get connected to $6$ naturally, as $d_{76} < d_{75}$ and $d_{86} < d_{85}$. However, consider $d_{65}$ and $d_{75}$,
\begin{align*}
    d_{65} = var(\mathbf{v}_6^a-\mathbf{v}_5^a) + var(\mathbf{v}_6^c-\mathbf{v}_5^c),\quad
    d_{75} = var(\mathbf{v}_7^c-\mathbf{v}_5^c)
\end{align*}
We have no guarantee that $d_{65} < d_{75}$ due to the presence of additional phase variance in $d_{65}$ illustrating how an algorithm that doesn't order nodes by decreasing number of phases may return an incorrect topology, unlike GPT.

\subsection{Alternative Estimation Scenarios}
In the general scenario, GPT recovers both phase and topology from voltage measurements. Our theoretical results also establish estimation methods for restricted settings:
\subsubsection{Phase Identification with Topology Information}
If topology is known, phases can be identified by greedily matching adjacent nodes using $c_{ij}^{\mathcal{O}}$ (\ref{eq:covarv}) across edges $ij \in \mathcal{E}$. \cite{olivier2018phase, pezeshki2012consumer, arghandeh2015topology} similarly use the Pearson correlation coefficient of voltages as the distance, which is related to the covariance but not theoretically justified. GPT is highly local, unlike \cite{blakely2019spectral, wang2016phase, olivier2018phase, short2012advanced, pezeshki2012consumer} which cluster all nodal voltages to recover phase. K-means is a popular clustering algorithm choice \cite{wang2016phase}. However, even if the correct phase matching is the globally optimal solution of the k-means cost, the optimization is non-convex and may not converge to the global minima. Our greedy approach, however, is guaranteed to result in the optimal solution.

\subsubsection{Topology Estimation with Phase Information}
If phase labels are known, $d_{ij}$ (\ref{eq:sovDist}) can be directly minimized to recover topology and GPT reduces to greedy spanning tree learning generalizing prior work for the single phase case \cite{deka2017structure}. Compared to \cite{liao2019unbalanced, deka2019topology} that use conditional independence tests and need matrix inversions, GPT has improved sample performance, as demonstrated in Section \ref{sec:Results}. 

\subsubsection{Estimation using voltage magnitudes only}
While GPT is based on nodal voltage phasors, it can also use voltage magnitudes $v_i^{\phi} = |\mathbf{v}_i^{\phi}|$. This is theoretically justified by linearizing (\ref{eq:line_impedance}) for line $kl$:
\begin{align*}
    \mathbf{I}_{kl} &= \mathbf{Y}_{kl}\begin{bmatrix}e^{j\theta^a}(v_k^ae^{j\underline{\theta}_k^a}-v_l^ae^{j\underline{\theta}_l^a}) \\ e^{j\theta^b}(v_k^be^{j\underline{\theta}_k^b}-v_l^be^{j\underline{\theta}_l^b}) \\ e^{j\theta^c}(v_k^ce^{j\underline{\theta}_k^c}-v_l^ce^{j\underline{\theta}_l^c})  \end{bmatrix}\nonumber\\
    &\approx \mathbf{Y}_{kl}\mathbf{D}_r((V_k-V_l)+j(\underline{\theta}_k-\underline{\theta}_l))
\end{align*}
where $\theta^{\phi}$ is the phase $\phi$ reference angle and $\underline{\theta}_k^{\phi} = \theta_k^{\phi} - \theta^{\phi}$. The linearization assumes small magnitude deviations from the reference and small angle difference between neighboring nodes. Properties of voltage magnitudes across the network can then be derived under assumptions on  $\mathbf{Y}_{kl}\mathbf{D}_r, \mathbf{I}_{kl}$ to obtain GPT for phase and topology recovery. 

\begin{figure*}[ht]
\centering     
\includegraphics[width=\textwidth]{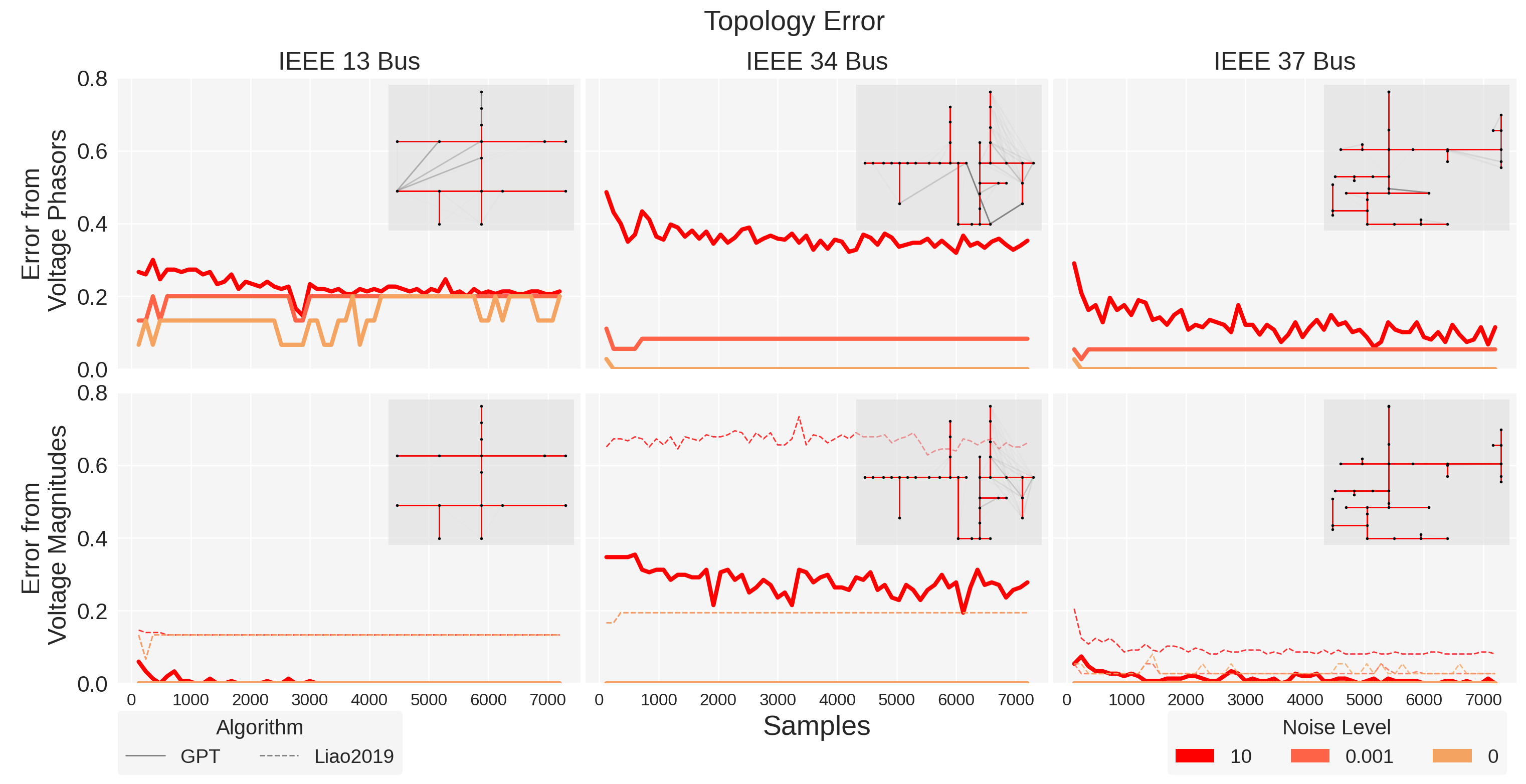}
\caption{GPT topology error vs number of samples for three test feeders and three noise levels. Samples are assumed to arrive at $120 Hz$. Insets show estimated (grey) and true (red) network lines across trials, with the opacity of grey lines indicating how many times the edge was recovered. GPT is evaluated on voltage phasor and magnitude data, with performance on magnitudes compared to state of the art in Liao2019 \cite{liao2019unbalanced}.}
\label{fig:top_results}
\end{figure*}
\begin{figure}[ht]
\centering     
\includegraphics[width=\columnwidth]{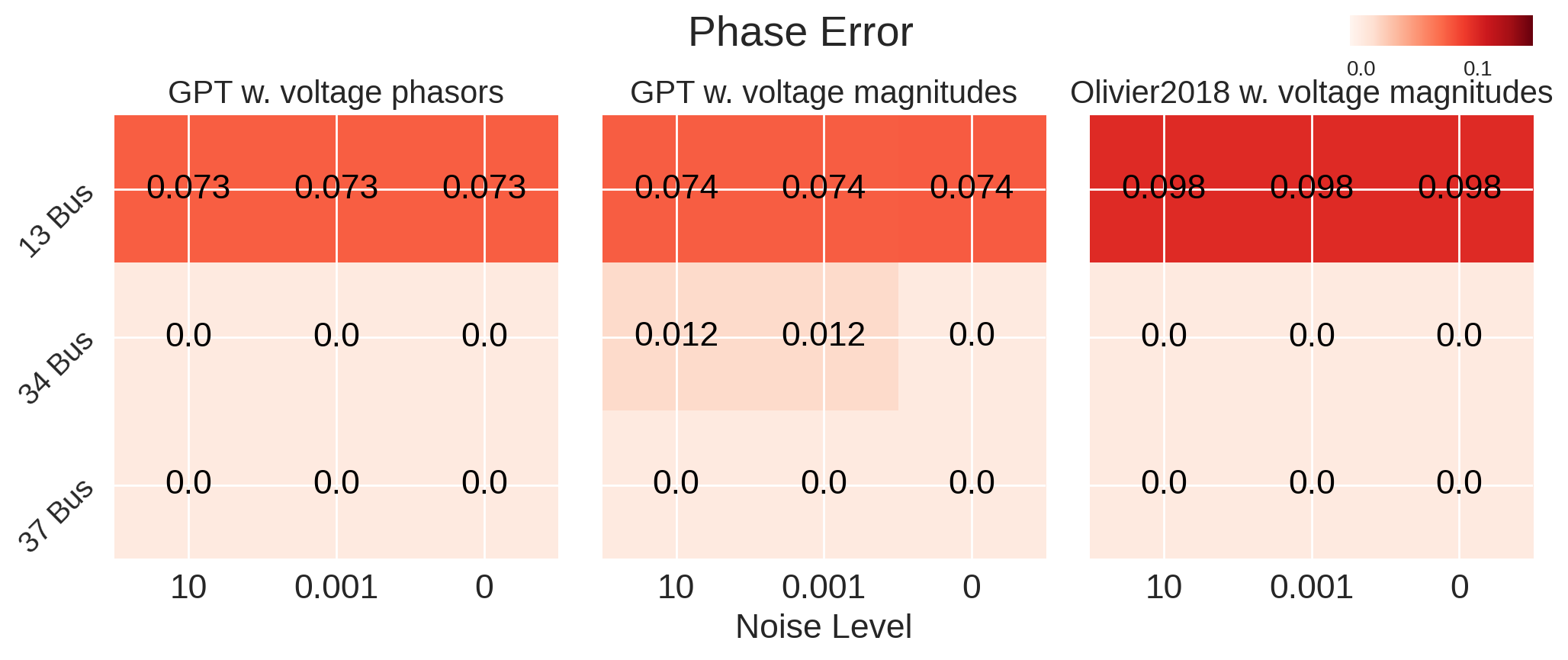}
\caption{GPT phase matching error for three test feeders and three noise levels using voltage phasor and magnitude data. The last table allows comparison with the state of the art phase matching method of Olivier2018 \cite{olivier2018phase}. Mostly, GPT has comparable performance to Olivier2018, but significantly outperforms it on the 13 bus system.} 
\label{fig:phase_results}
\end{figure}
\begin{figure}[ht]
\centering     
\includegraphics[width=\columnwidth]{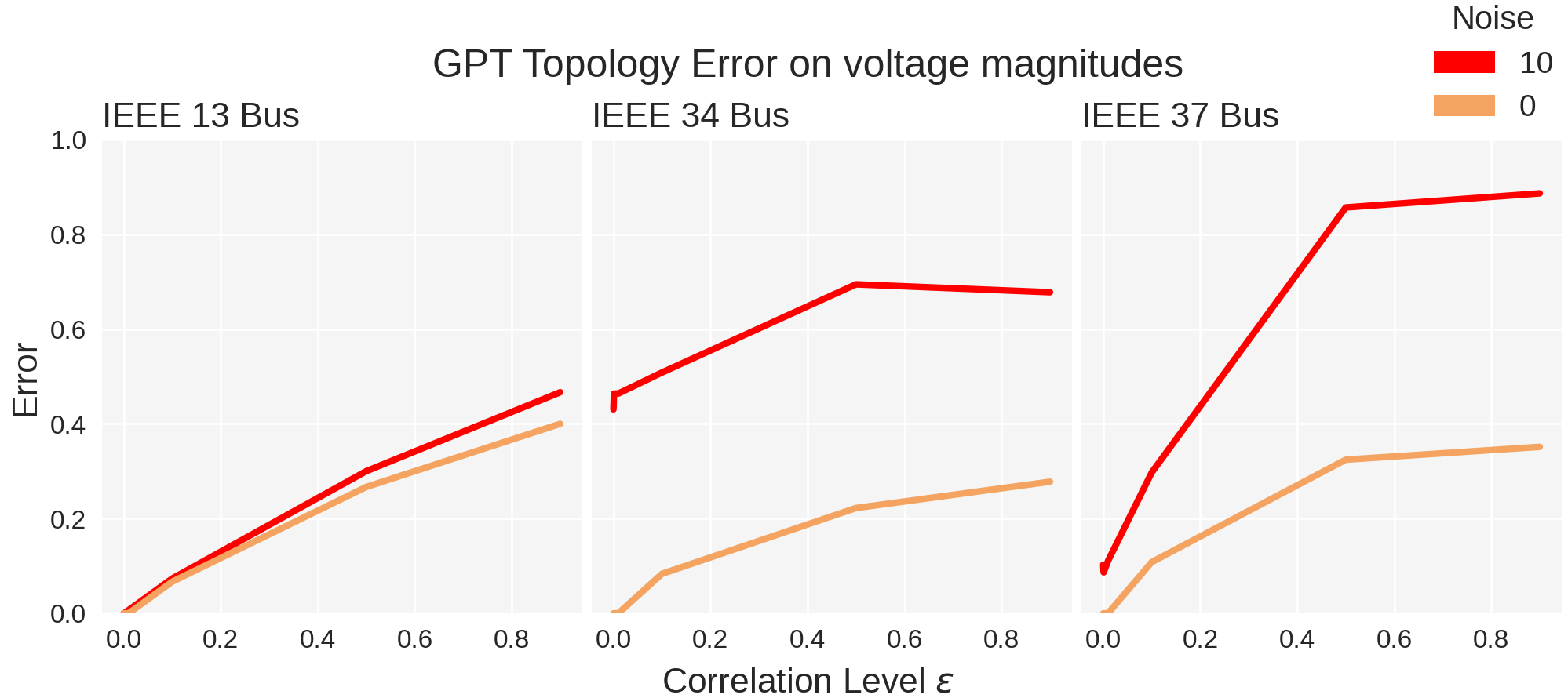}
\caption{GPT topology error from voltage magnitudes as injections become increasingly correlated ($\epsilon \rightarrow 1$) for three test feeders and two noise levels. Notice the rapid rise in error with increasing correlation particularly for the largest network.}
\label{fig:corr_results}
\end{figure}
\section{Simulation Experiments}\label{sec:Results}
We present simulation results of GPT. We measure average errors in phase and topology recovery, normalized by network size: 
\begin{align*}
    &\textnormal{Topology Error} = \frac{\textnormal{wrong edges} + \textnormal{missing edges}}{\textnormal{total edges}},\\
    &\textnormal{Phase Error} = \frac{\textnormal{wrong nodal phases}}{\textnormal{total nodal phases}}
\end{align*}
Further, we evaluate the algorithm's sensitivity to the following parameters. 
\begin{itemize}[leftmargin=*]
    \item \emph{Measurement noise}: We add white noise $\mathbf{n}$ to original measurement $\mathbf{v}_i$: $\tilde{\mathbf{v}}_i= \mathbf{v}_i + \mathbf{n}$, defining $
        \textnormal{noise level}(\tilde{\mathbf{v}}_i) = \frac{var(\mathbf{n})}{var(\mathbf{v}_i)}
    $. As GPT uses voltage covariances, it depends on relative precision and not absolute accuracy and is immune to the stable transducer errors that afflict distribution PMU data \cite{brady2016uses}.
    \item \emph{Number of measurement samples}: Assuming $120$ Hz distribution PMU measurements \cite{von2014micro}, we record performance on $1$ second to $1$ minute of voltage data. 
    \item \emph{Load Correlations}: We test GPT's sensitivity to the assumption of uncorrelated injections, by varying the correlations of the loads while maintaining their variance. This is done by setting load covariance matrix $\Sigma = \sigma^2 ((1-\epsilon) \mathbb{I} + \epsilon11^T)$. As $\epsilon \rightarrow 1$, injections become more correlated. 
\end{itemize}

Three IEEE distribution test networks are simulated in OpenDSS: the 13 and 34 bus networks have some one and two phases buses, while the 37 bus network has all three phase buses \cite{kersting1991radial}. We modify the models by adding loads at every bus, and by disabling voltage regulators, which invalidate the assumption of voltages driven by injections. We fluctuate the load injections at each phase at each bus, and simulate the network with power flow to obtain non-linear voltages. 

Fig. \ref{fig:top_results} plots topology recovery accuracy for three noise levels ranging from $0$ (no noise) to $10$, with $1$ second to $1$ minute of voltage magnitude or phasor measurements. PMUs are highly precise; and the noise level would realistically be $\sim 0.001$ \cite{brown2016characterizing, bariya2019empirical}. Nevertheless, GPT performs well under more noise as measurement samples increase. For all test networks and measurement durations, GPT achieves perfect topology recovery from voltage magnitudes for $0$ and $0.001$ noise. Insets in Fig. \ref{fig:top_results} show recovered topologies across trials. Note how errors are localized to a few nodes, and lower for voltage magnitudes. Fig \ref{fig:top_results} also compares performance on voltage magnitudes to Liao2019 \cite{liao2019unbalanced}, showing that GPT outperforms it across scenarios. 

Fig. \ref{fig:phase_results} presents GPT's phase matching error on the same three networks and noise levels averaged across several sample durations (we found phase matching error to be invariant to sample duration). The error is compared to that of the approach in Olivier2018 \cite{olivier2018phase}. The methods have comparable performance, except on the 13 bus network, where GPT outperforms Olivier2018 across SNRs. 

Fig. \ref{fig:corr_results} shows topology recovery sensitivity as injections stray from the uncorrelated assumption (\ref{eq:assumption1}). Error increases rapidly as loads become more correlated. In reality, over short time durations, it is reasonable to assume that injections will be uncorrelated across nodes or can be de-trended \cite{liao2019unbalanced}. We use at most one minute of data to recover phase and topology: short enough that the uncorrelated assumption should hold well.

Our polynomial time algorithms are suitable for real time application, taking on the order of seconds to recover phase and topology for the IEEE test cases shared here. On the largest 37 bus test case, the algorithm completes in $15$ seconds. 

\section{Conclusion}
We presented GPT for joint phase and topology identification from voltage measurements in unbalanced three phase networks where each bus can have single, two, or three phases. GPT is polynomial time and relies on analytical trends in nodal voltage statistics. We proved its theoretical correctness under reasonable assumptions on load statistics and line impedances, demonstrating its efficacy on non-linear voltages from three test feeders simulated under realistic conditions. GPT is robust to measurement non-idealities, and outperforms the prior work in both phase and topology recovery. 

Extensions of the work include handling of voltage regulators at some nodes and improving performance under load correlations. The linearized unbalanced three phase model presented here can motivate many future analytics, including extensions to settings with limited measurements, where our derived distance metrics could give a sense of the relative proximity of dispersed measurement points if not the precise topology. It is challenging to directly extend this work to non-radial settings as inter-nodal distances become difficult to define precisely in meshed networks. However, the model could inform graph theoretic monitoring approaches of meshed networks, similar to those in \cite{bariya2020physically}.

\bibliographystyle{ieeetr}
\bibliography{bibliography}

\begin{IEEEbiography}
    [{\includegraphics[width=1in,height=1.25in,clip,keepaspectratio]{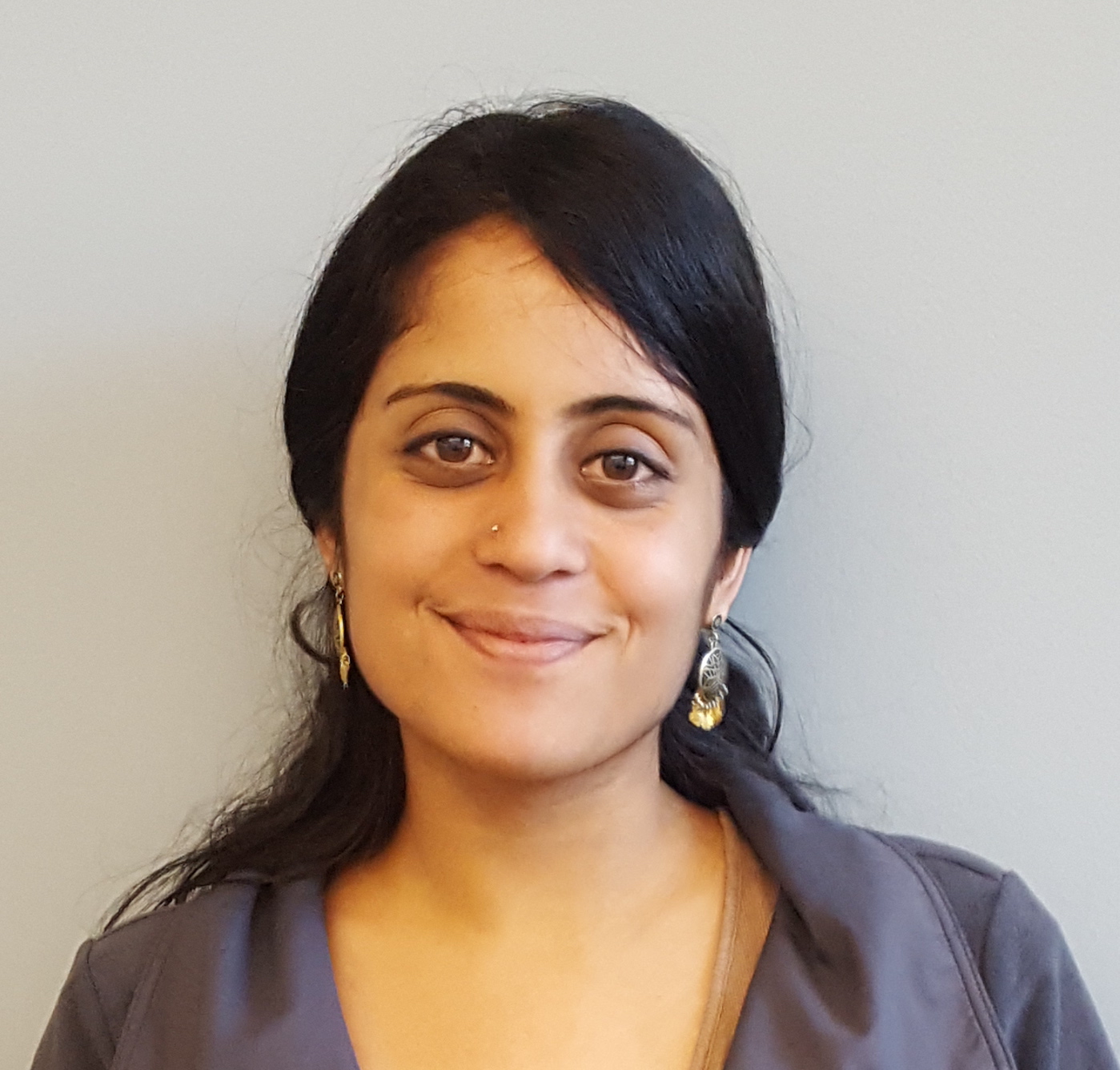}}]{Mohini Bariya} is a doctoral candidate in the Department of Electrical Engineering
and Computer Science at the University of California, Berkeley. Her work is on grid modernization and renewables integration, with a focus on using sensor measurements for system awareness to enable
safer, more efficient grid operations. She received her B.A. in computer science (2016) from UC Berkeley.
\end{IEEEbiography}
\vskip -2\baselineskip
\begin{IEEEbiography}
    [{\includegraphics[width=1in,height=1.25in,clip,keepaspectratio]{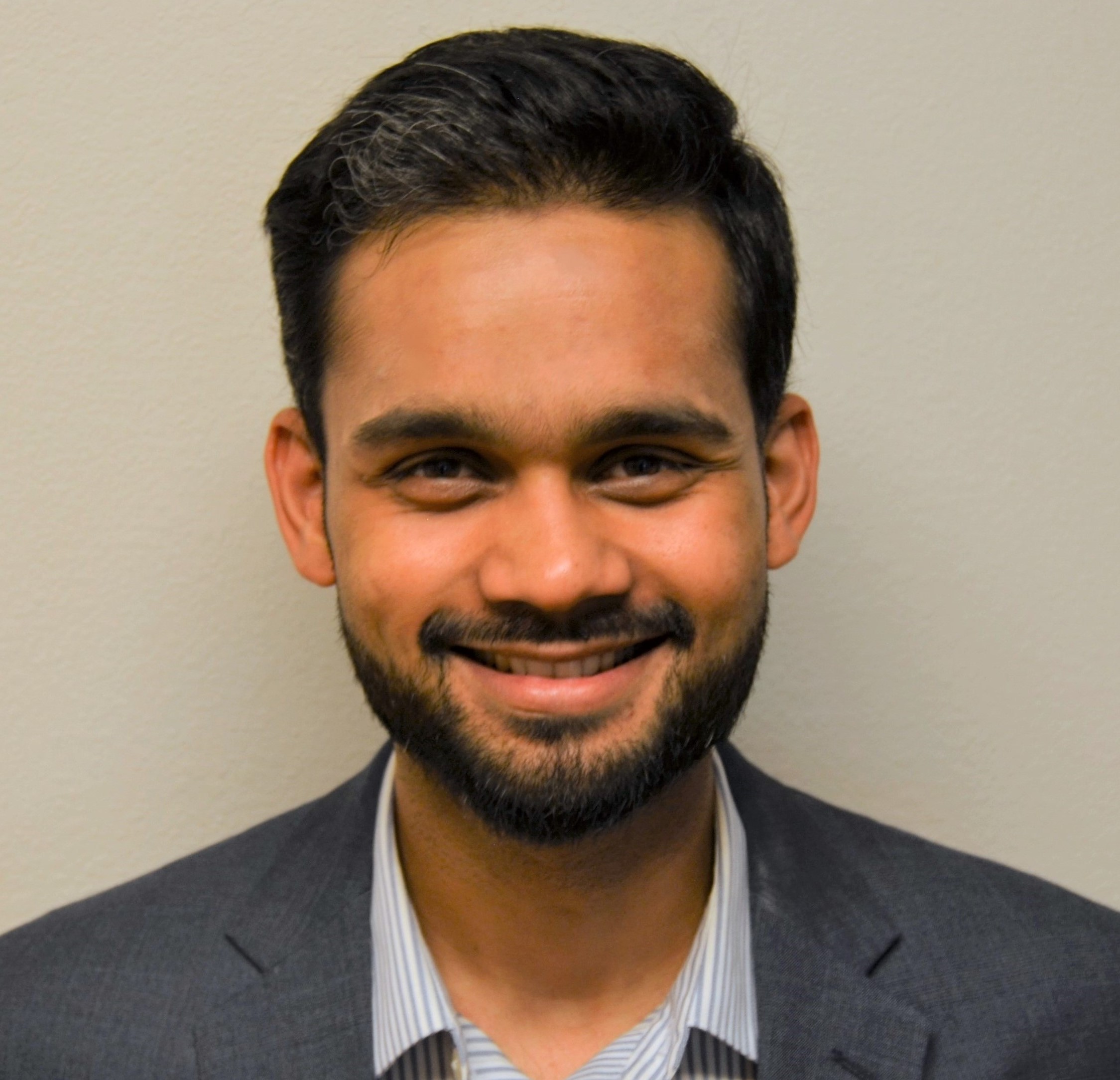}}]{Deepjyoti Deka (SM'20)} is a staff scientist in the Theoretical Division at Los Alamos National Laboratory, where he was previously a postdoctoral research associate at the Center for Nonlinear Studies. His research interests include data-analysis of power grid structure, operations and security, and optimization in social and physical networks. At LANL, Dr. Deka serves as a co-principal investigator for DOE projects on machine learning in power systems and in cyber-physical security. Before joining LANL, he received the M.S. and Ph.D. degrees in Electrical and Computer Engineering from the University of Texas, Austin, TX, USA, in 2011 and 2015, respectively. He completed his undergraduate degree in Electronics and Communication Engineering from IIT Guwahati, India with an institute silver medal as the best outgoing student of the department in 2009.
\end{IEEEbiography}
 \vskip -2\baselineskip
\begin{IEEEbiography}
    [{\includegraphics[width=1in,height=1.25in,clip,keepaspectratio]{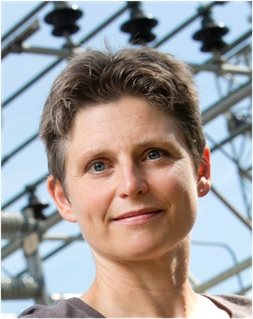}}]{Alexandra “Sascha” von Meier (M'10)} is an adjunct professor in the Department of Electrical Engineering and Computer Science at the University of California, Berkeley and directs the Electric Grid Research program at the California Institute for Energy and Environment (CIEE). She is also a faculty scientist at the Lawrence Berkeley National Laboratory. Her research is driven by the vision of a nimble, adaptable and resilient electric power infrastructure that optimally recruits intermittent renewable resources, energy storage and electric demand response. She holds a B.A. in Physics (1986) and a Ph.D. in Energy and Resources (1995) from UC Berkeley.
\end{IEEEbiography}

\end{document}